\documentclass[11pt]{article}
\usepackage[utf8]{inputenc}
\usepackage{amsfonts}
\usepackage{braket}
\usepackage[bottom]{footmisc}
\usepackage{amsmath}
\usepackage{amsthm}
\usepackage{amssymb}
\usepackage{comment}
\usepackage{subcaption}
\newtheorem{lem}{Lemma}
\newtheorem{remark}{Remark}
\newtheorem{example}{Example}

\newtheorem{defn}{Definition}
\newtheorem{fact}{Fact}
\newtheorem{claim}{Claim}
\newtheorem*{claim*}{Claim}
\newtheorem{corollary}{Corollary}
\newenvironment{proofs}{\noindent\textit{Proof Sketch:}}{\hfill$\square$}
\usepackage{mathtools}

\DeclarePairedDelimiter\floor{\lfloor}{\rfloor}
\newtheorem{theorem}{Theorem}
\usepackage{geometry}
\geometry{a4paper, left=25mm, right=25mm,top=25mm, bottom=25mm}

\title{The $aBc$ Problem and Equator Sampling R\'enyi Divergences}
\author{H. Klauck\footnote{Centre for Quantum Technologies, Singapore, {\tt hklauck@gmail.com}. This work is funded by the Singapore Ministry of Education and by the Singapore National Research Foundation. Also supported by Majulab UMI 3654.} \ and
 D. Lim\footnote{Centre for Quantum Technologies, {\tt e0352761@u.nus.edu.sg}}}

\begin{document}

\date{}
\maketitle

\begin{abstract}
We investigate the problem of approximating the product $a^TBc$, where $a,c\in S^{n-1}$ and $B\in O_n$, in models of communication complexity and streaming algorithms. The worst meaningful approximation is to simply decide whether the product is 1 or -1, given the promise that it is either. We call that problem the $aBc$ problem. This is a modification of computing approximate inner products, by allowing a basis change. While very efficient streaming algorithms and one-way communication protocols are known for simple inner products (approximating $a^Tc$) we show that no efficient one-way protocols/streaming algorithms exist for the $aBc$ problem. In communication complexity we consider the 3-player number-in-hand model.
We consider a setting where the players holding $B,c$ may confer over many rounds, while there is only one message to Alice.
Our main tools for lower bounds are geometric concentration results about R\'enyi divergences.

We show that:
\begin{enumerate}
\item In communication complexity $a^TBc$ can be approximated within additive error $\epsilon$ with communication $O(\sqrt n/\epsilon^2)$ by a one-way protocol Charlie to Bob to Alice.
\item The $aBc$ problem has a streaming algorithm that uses space $O(\sqrt n \log n)$
\item Any one-way communication protocol for $aBc$ needs communication at least $\Omega(n^{1/3})$, and we prove a tight results regarding a communication tradeoff: if Charlie and Bob communicate over many rounds such that Charlie communicates $o(n^{2/3})$ and Bob $o(n^{1/3})$, and then the transcript is sent to Alice, the error will be large.
\item To establish our lower bound we show concentration results for R\'enyi divergences under the event of restricting a density function on the sphere to a random equator and subsequently normalizing the restricted density function. This extends previous results by Klartag and Regev \cite{KR} for set sizes to R\'enyi divergences of arbitrary density functions.
\item We show a strong concentration result for conditional R\'enyi divergences on bipartite systems for all $\alpha>1$, which does not hold for $\alpha=1$.
\end{enumerate}

\end{abstract}

\section{Introduction}

Inner products are fundamental in linear algebra, mapping two vectors to a number measuring their overlap. While computing and approximating such simple inner products has been investigated extensively in communication complexity and data-streaming (see e.g. \cite{KNR,ams:frequency}), more general bilinear forms have rarely been considered. In this paper we consider the complexity of computing a bilinear form/inner product of the form $a^TBc$, where $a,c$ are unit vectors and $B$ is an orthogonal matrix (all things being real). The models we consider are 3-player number-in-hand multiparty communication complexity and streaming algorithms.

A reasonably strong requirement on such a computation is to find the value of $a^TBc$ up to some {\bf additive} error $\epsilon$.
If we skip the matrix $B$, then there exist both streaming algorithms and communication protocols that require space/communication $O(\log n/\epsilon^2)$ \cite{ams:frequency}. This is also optimal \cite{ChaReg}. We call such inner products {\em simple}.

In this paper we consider non-simple inner products of the type $a^TBc$. Our main result is that this problem is hard, even for the worst meaningful approximation, in the one-way communication complexity model, and in a stronger setting we describe below. We also show that an additive approximation is possible in one-way communication complexity with communication $O(\sqrt n/\epsilon^2)$ , and that the worst meaningful approximation is computable within space $O(\sqrt n\log n)$ in the streaming model. We conjecture that an additive error approximation is also possible as efficiently in the streaming model.

One main contribution of this paper are concentration inequalities for R\'enyi divergences of density functions on the sphere, when restricted to a random equator and then normalized. These generalize a result by Klartag and Regev \cite{KR} that can be summarized as follows: given a large enough subset of the sphere, restricted to a random equator that set will be of almost the same size except with low probability. The generalization is from subsets and their sizes to (re-normalized) density functions and their divergence.

Another motivation for the $aBc$ problem is that we have previously \cite{KLim} introduced the $ABC$ problem, in which Alice, Bob, Charlie receive matrices from the special orthogonal group $SO_n$ each ($n$ even), and have to decide whether $ABC=I$ or $ABC=-I$. An algorithm for $aBc$ clearly allows us to solve the $ABC$ problem: pick any row/column $i$ and restrict the $ABC$ problem to $A_{i,\cdot}BC_{\cdot,i}$. The interest in the $ABC$ problem is that it can be solved in a model of quantum communication complexity where the whole quantum storage contains only one fully coherent qubit, and the rest of the quantum storage is in a totally mixed state (at the beginning), while requiring only $O(\log n)$ communication (on $n\times n$ matrices). A large lower bound for the randomized complexity of $ABC$ would lead to a quantum supremacy result in which a model with weak quantum storage (but good control) would outperform the corresponding classical model in a way that is provable without any assumptions and works for errors as large as constant.
We note here that the $aBc$ problem itself allows a quantum number-in-hand one-way protocol of complexity $O(\log n)$, in which the final measurement is against an observable defined by a single quantum state (the problems considered in \cite{KR, expsep} require a measurement against an observable defined by an $n/2$-dimensional subspace).

We believe that the $aBc$ problem is interesting in its own right as well as probably useful in other contexts, and understanding its communication complexity is important.
Also, in an  actual implementation, the final measurement against a 1-dimensional state should be easier to implement than against an $n/2$ dimensional subspace.

%\section{Related Work}

\section{Preliminaries}

\subsection{Manifolds}

We consider several compact Riemannian manifolds in this paper. All of them allow a uniform distribution via the Haar measure.

By $S^{n-1}$ we denote the sphere of real unit vectors in $\mathbb{R}^n$. $O_n$ denotes the set of real orthogonal $n\times n$-matrices.
The Stiefel manifolds $St_{k,n}$ consist of all $k$-tuples of unit vectors of dimension $n$ that form an orthogonal system.
We refer to \cite{ABoBa} for more information. $\sigma$ is used to denote the uniform distributions on a manifold, which is usually implicit.
We regard volume elements as normalized, so that on a compact manifold as considered here, a density function integrates to 1, and a uniform density is 1 everywhere.

\subsection{Differential R\'enyi Divergence}

In this paper all information theoretic notions are in the differential setting, i.e., for random variables that are not discrete.
We refer to \cite{div} for background on R\'enyi divergences.
\begin{defn}[Simple Orders] We call any $\alpha\in\mathbb{R}$ such that $\alpha\in(0, 1)\bigcup (1, \infty)$ a simple order.
\end{defn}

\begin{defn}[Extended Orders] $\alpha$ having the value 0, 1, or $\infty$ is called an extended order.
\end{defn}

\begin{defn}[R\'enyi Divergence]Let $P$ and $Q$ be two arbitrary distributions on a measurable space $(\Omega,\mathcal{F})$ that have density functions $p,q$. The R\'enyi divergence of (simple) order $\alpha$ of $P$ from $Q$ is defined as
\begin{equation}\label{D_alpha}
    D_\alpha(P||Q)=\frac{1}{\alpha -1}\ln\int p^\alpha q^{1-\alpha} d\sigma.\end{equation}
\noindent The R\'enyi divergence for extended orders are defined as follows:
\begin{equation}
    D_0(P||Q) = -\ln Q(p>0).
\end{equation}
\begin{equation}\label{D_1}
    D_1(P||Q) = D(P||Q) = \int p\ln\frac{p}{q} d\sigma.
\end{equation}
\begin{equation}
    D_\infty(P||Q) = \ln\Big(\text{ess}\sup_P\frac{p}{q}\Big).\end{equation}
\end{defn}

\begin{remark}
    $D_1(P||Q)$ is also known as the Kullback–Leibler divergence.
\end{remark}

The R\'enyi divergence for the discrete case is defined in a similar way, except that the integral is replaced by summation and densities are replaced with probabilities.
This leads to big problems when considering concepts like entropy, but works fine for divergences.

\begin{example}\label{eg1}
Let $S\subseteq S^{n-1}$ be a set that has measure $\sigma(S)$. Consider a density function $f: S^{n-1}\rightarrow\mathbb{R}^+$ that is for the distribution that is uniform on S, i.e.,
$$f(x)=\begin{cases} 0, \text{ if } x\notin S\\
\frac{1}{\sigma(S)}, \text{ if } x\in S\\
\end{cases}.$$
Then $D(f||unif) = (1-\frac{1}{\sigma(S)})\cdot 0+\sigma(S)\cdot\frac{1}{\sigma(S)}\ln(\frac{1}{\sigma(S)})=\ln\Big(\frac{1}{\sigma(S)}\Big),$ where $unif$ denotes the uniform distribution.
\end{example}

\begin{example}
Let $f:S^{n-1}\rightarrow\mathbb{R}^+$ be a density function and let $unif$ denote the uniform distribution. Then
$$D_2(f||unif) = \ln\int_{S^{n-1}} |f|^2 d\sigma$$
$$D_1(f||unif) = \int_{S^{n-1}} f\ln f d\sigma$$
\end{example}

\subsection{Spherical Harmonics}
Let $L^2(S^{n-1})$ denote the space of all square-integrable functions on $S^{n-1}$, i.e.,
$$f\in L^2(S^{n-1}) \iff\int_{S^{n-1}} |f|^2 d\sigma <\infty.$$
\begin{defn}[Spherical Harmonics ($S_k$) \cite{KR}]
    For any integer $k\geq 0$, $S_k$ denotes the spherical harmonics of degree $k$, which is the restriction to the sphere of all harmonic, homogeneous polynomials of degree $k$ in $\mathbb{R}^n$.
\end{defn}

Let $Proj_{S_k}$ denote the orthogonal projection operator onto $S_k$ Then for any $f\in L^2(S^{n-1})$, we have
$$f = \displaystyle\sum_{k=0}^\infty Proj_{S_k}f,$$ where the sum converges in $L^2(S^{n-1}).$

\subsection{Noise operator} The noise operator on $S^{n-1}$ is given by
$$U_\rho=\rho^{-\Delta},$$
where $\Delta$ is the spherical Laplacian\footnote{See \cite{KR} for more background.} for $0\leq\rho\leq 1$. Let $-\lambda_k=-k(k+n-2)$ denote the eigenvalues of the spherical Laplacian. Then, for any $k\geq 0$ and $\psi_k\in S_k$,
\begin{equation}U_\rho\psi_k=\rho^{\lambda_k}\psi_k.\end{equation}\label{noiseev}

\subsection{Hypercontractivity}
\begin{defn}[$p$-norm] Let $p\geq 1$ be a real number. The $p$-norm of a measurable function $f:S^{n-1}\rightarrow \mathbb{R}$ is given by
$$||f||_p=\Big(\int_{S^{n-1}}|f|^p d\sigma\Big)^{1/p}.$$
\end{defn}
\begin{defn}[$\infty$-norm] The $\infty$-norm of a measurable function $f:S^{n-1}\rightarrow \mathbb{R}$ is given by
$$||f||_\infty=\displaystyle\inf_{k\geq 0}\{k|\sigma(|f|>k)=0\},$$
where $\sigma$ is the uniform distribution.

\end{defn}

The hypercontractivity inequality on the sphere (again see \cite{KR} for more discussion about this) states that for any $1\leq p\leq q$ and any function $f\in L^p(S^{n-1})$,
\begin{equation}\label{hyper}
    ||U_\rho f||_q\leq ||f||_p,
\end{equation}
where $0\leq \rho \leq \Big(\frac{p-1}{q-1}\Big)^{1/(2n-2)}$ and $U_\rho$ is the noise operator mentioned in the previous subsection.

This means that $U_\rho$ is not merely a contraction, but that it contracts even when we increase the parameter of the norm.
In a given compact manifold hypercontractivity of the noise operator derived from the Laplace-Beltrami operator is equivalent to the truth of a log-Sobolev inequality, which can be deduced from the Bakry-Emery criterion (relying on a lower bound on the curvature of the manifold). Both of these also imply concentration of measure for Lipschitz functions. We recommend \cite{ABoBa} for an introduction to the subject.

\subsection{Radon Transform}
For any $y\in S^{n-1}$, we denote the uniform probability measure on the sphere $S^{n-1}\bigcap y^\perp$ as $\sigma_{y^\perp}$. Then the spherical Radon transform $R(f)$ of an integrable function $f: S^{n-1}\rightarrow \mathbb{R}$ is defined as
$$R(f)(y)=\int_{S^{n-1}\bigcap y^\perp}f(x)d\sigma_{y^\perp}(x).$$
Let $V_n$  be defined as follows:
\begin{equation}\label{V_n}
    V_n=\big\{(x, y)\in S^{n-1}\times S^{n-1}|x\cdot y=0\big\}.
\end{equation}
Note that $V_n$ is just the Stiefel manifold $St_{2,n}$.

We note the following observation \cite{KR}: for functions $f, g\in L^2(S^{n-1})$,
\begin{equation}\label{radon}
    \int_{V_n}f(x)g(y)d\sigma_V(x, y)=\int_{S^{n-1}}f(x)R(g(x))d\sigma.
\end{equation}
Define for all even $k\geq 0$,
$$\mu_k=(-1)^{k/2}\mathbb{E}[X_1^k]$$
where $X=(X_1,\cdots X_{n-1})$ is a random vector that is uniformly distributed in $S^{n-2}$. Then, $S_k$ is an eigenspace of $R$ with $\mu_k$ being the eigenvalue  \cite{KR}.
Odd $\mu_k$ are 0. $\mu_0=1$.

\subsection{Communication Complexity and Rectangles}

In the number-in-hand model Alice, Bob, Charlie receive inputs from $X\times Y\times Z$. Their task is to compute a (partial) Boolean function $f(x,y,z)$, or to compute an approximation to a real function $f(x,y,z)$. Each player has one input and knows that input only initially. All our protocols are either public coin randomized or distributional.

Our main lower bound applies to protocols of the following kind. The protocols are deterministic (or respectively distributional, after fixing random bits in the randomized case) and we consider a three-player protocol where Bob and Charlie can communicate with each other over many rounds and then send the resulting transcript to Alice, who has to produce the output (Bob$\leftrightarrow$Charlie$\rightarrow$Alice). The cost of such a protocol is the length of the transcript between Bob and Charlie.
Such protocols decompose into what we call one-way rectangles\footnote{This is a generalization of one-way rectangles in the two player case \cite{klauck:tradeoffs}.}
\begin{defn}[One-way Rectangle]
A three-player one-way rectangle $R$ is a set $S\times T$ such that $S\subseteq Y$, $T\subseteq Z$, and a function $Acc:X\mapsto\{0, 1\}$. Let $\mu$ be a distribution on $X\times Y\times Z$. The error of $R$ is given by $\frac{\mu\big((x,y,z)\in X\times R: Acc(x)\neq f(x,y,z)\big)}{\mu(R)}$ and the size $\mu(R)$ is $\mu(X\times S\times T)$.
\end{defn}

The corresponding lower bound for the distributional communication complexity of $f(x,y,z)$ is minus the logarithm of the size of the largest one-way rectangle with error $\epsilon$ under some distribution $\mu$. As usual, distributional complexity lower bounds randomized complexity \cite{kushilevitz&nisan:CC}.

We define $R^{C\to B\to A} (f)$ as the complexity of the cheapest one-way protocol for $f$, i.e., a protocol in which Charlie sends to Bob sends to Alice. $R^{C\leftrightarrow B\to A}(f)$ is the cost of the cheapest protocol as described above for $f$. If no probability of failure is indicated explicitly it is assumed to be 1/3.

\subsection{Pinsker's Inequality}\label{pinsker}
If $P$ and $Q$ are are two probability distributions on a measurable space $(\Omega,\mathcal{F})$, then
$$\delta(P, Q)\leq \sqrt{\frac{1}{2}D_1(P||Q),}$$
where $\delta(P, Q)=\displaystyle\sup_A\{|P(A)-Q(A)|\}$ is the total variational distance between $P$ and $Q$, where $A\in\mathcal{F}$ is a measurable event.

\subsection{Data Processing Inequality}
Let $P$ and $Q$ be probability measures defined on $(\Omega, \mathcal{F})$. Let $\mathcal{A}=\{A_i|i\geq 1\}$ be any partition of $\Omega$ and let $P_{\mathcal{A}}=\{P(A_i)|i\geq 1\}$ and $Q_{\mathcal{A}}=\{Q(A_i)|i\geq 1\}$. Then,
$$D_\alpha(P||Q)\geq D_\alpha(P_{\mathcal{A}}||Q_{\mathcal{A}}).$$
\subsection{Measurability}

We assume that all sets considered in this paper are Borel-measurable. In actual applications the inputs would be from a finite set defined by representing numbers with sufficient precision, say $1/poly(n)$. In this situation the problem $aBc$ needs to be re-defined to allow some slack, i.e., $aBc$ should be 1 if $a^TBc$ is very close to 1. Any algorithm for this relaxed problem yields an algorithm for the exact problem by rounding, and hence lower bounds for the exact problem yield lower bounds for the relaxed problem. Furthermore, in the relaxed problem rectangles are finite product sets, and rounding turns those into simple measurable sets.

\subsection{Constants}

We often use upper case letters for constants that are large enough and lower case letters for constants that are small enough. We usually do not track the value of constants very tightly, and so $C$ may mean something different in a proof, when moving from line to line.

\section{The Problem}

We consider variants of the problem to compute/approximate the product $a^TBc$, where $a,c$ are real unit vectors, and $B$ is an orthogonal matrix. This problem is considered both as a data-streaming problem, and as a communication complexity problem. In the data-streaming setting the vectors are streamed entry-wise, and the matrix row-by row entry-wise. More details about that later. The goal here is to approximate the product $a^TBc$ with additive error $\epsilon$, but we also consider the decision version, where it is promised that either $a^TBc=1$ or $a^TBc=-1$. This is in a sense the worst approximation still meaningful for this kind of product.

In communication complexity we consider the 3 player number-in-hand model. In the problem $aBc$, Alice is given a vector $a\in S^{n-1}$, Bob is given an $n\times n$ matrix $B$ from the orthogonal group $O_n$, and Charlie gets a vector $c\in S^{n-1}$. The problem is described by the following function:
$$aBc(a, B, c)=1\Leftrightarrow a^T\cdot B\cdot c=1,\hspace{10mm}aBc(a, B, c)=-1\Leftrightarrow a^T\cdot B\cdot c=-1.$$

We also consider the problem of approximating $a^TBc$ within additive error $\epsilon$.

\section{Notations}

The following notations are crucial to this paper.

\begin{center}
    \begin{tabular}{ |p{2cm}|p{3cm}|p{10cm}| }
     \hline
     \multicolumn{3}{|c|}{Notations, Mappings and Definitions} \\
     \hline
    Notation & Mapping & Definition\\
     \hline
    $f$  & $S^{n-1}\rightarrow \mathbb{R}^+$ & A density function of a distribution on the sphere \\
    $H$ & \hspace{1.5cm} - & A uniformly chosen hyperplane \\
    $f_{|H}$ & $S^{n-1}\bigcap H\rightarrow \mathbb{R}^+$ & Density function $f$ restricted to a randomly chosen equator \\
    $\bar{f}_{|H}$ & $S^{n-1}\bigcap H\rightarrow \mathbb{R}^+$ & Normalized version of $f_{|H}$\\
    ${unif}$ & $S^{n-1}\rightarrow\mathbb{R}^+$ & Uniform  distribution on the sphere\\
    ${unif}_{S^{n-1}\bigcap H}$ & $S^{n-1}\bigcap H\rightarrow\mathbb{R}^+$ & Uniform  distribution on the sphere on equator $H$\\
    $D_{\alpha}(f||g)$ & \hspace{1.5cm}- & The $\alpha$-R\'enyi divergence of $f$ from $g$\\
     \hline
    \end{tabular}
\end{center}

$\sigma$ is used for uniform densities on manifolds.

\section{Upper Bounds}
\subsection{Communication Complexity}
We first present an $O(\sqrt{n})$ communication protocol for the decision problem $aBc$ stated above. This protocols is also presented in our paper \cite{KLim}. We later show how to modify the protocol so as to work without the promise, with additive error $\epsilon$.
\begin{enumerate}
    \item Charlie and Bob share a set $T$ of $2^{O(k)}$ random unit vectors $w\in S^{n-1}$ as public coin, where $k$ is a parameter to be determined later. Among the $2^{O(k)}$ vectors shared with Bob, Charlie computes $W_{max}=argmax_{w\in T}\{\braket{w|c}\}$.
    \begin{lem}
       Define $T$ as a set of vectors randomly drawn from $S^{n-1}$ under the Haar measure (the unique rotationally-invariant probability measure on $S^{n-1}$ ), such that $|T|={32\sqrt{k}}e^{2k}$. If $v\in S^{n-1}$ is a fixed vector, then there exists a $w\in T$ that has an inner product with $v$ that is greater than $\sqrt{\frac{k}{n}}$ with high probability, for  all $1\leq k\leq \frac{n}{4}$.
    \end{lem}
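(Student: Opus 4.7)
The plan is a standard concentration-of-measure argument combined with independence of the samples in $T$. Fix $v\in S^{n-1}$; I would lower bound the single-sample probability $p := \Pr_{w\sim\sigma}[\langle w, v\rangle > \sqrt{k/n}]$, and then conclude by independence that with $|T|$ draws the probability that no $w\in T$ exceeds the threshold is vanishing.

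For the single-sample tail, rotational invariance of the Haar measure implies that $\langle w, v\rangle$ has the same distribution as the first coordinate of a uniform point on $S^{n-1}$, with density $c_n(1-s^2)^{(n-3)/2}$ on $[-1,1]$, where $c_n = \Gamma(n/2)/(\sqrt\pi\,\Gamma((n-1)/2)) = \Theta(\sqrt n)$. I would restrict the integral
\[ p \;=\; c_n \int_{\sqrt{k/n}}^1 (1-s^2)^{(n-3)/2}\, ds \]
to a subinterval of length $1/\sqrt n$ starting at $\sqrt{k/n}$, and lower bound the integrand by its value at the right endpoint $s=(\sqrt k + 1)/\sqrt n$. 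Since $k\leq n/4$ ensures $s^2\leq 1/2$ for $n$ large, the inequality $\ln(1-x)\geq -x-x^2$ (valid for $x\leq 1/2$) gives $(1-s^2)^{(n-3)/2}\geq C\,e^{-k - O(\sqrt k)}$ for an absolute constant $C$. Multiplying by $c_n/\sqrt n = \Theta(1)$ produces $p \geq c'\,e^{-2k}/\sqrt k$ for some absolute constant $c'>0$; this is loose but more than enough for what follows.

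The boost is immediate: because the $|T|=32\sqrt k\, e^{2k}$ vectors of $T$ are drawn independently,
\[ \Pr\bigl[\forall w\in T:\, \langle w, v\rangle \leq \sqrt{k/n}\bigr] \;=\; (1-p)^{|T|} \;\leq\; \exp(-p\,|T|) \;\leq\; \exp(-32\,c'), \]
a small constant, which yields the claimed high-probability statement. Plugging in the sharper Gaussian-type lower bound $p \gtrsim e^{-k/2}/\sqrt k$ would actually make the failure probability as small as $\exp(-e^{\Omega(k)})$.

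The main obstacle is calibrating the spherical cap lower bound in the first step. The hypothesis $k \leq n/4$ is exactly what keeps $s^2 = (\sqrt k + 1)^2/n$ bounded away from $1$, so that $\ln(1-s^2)$ can be approximated by $-s^2$ up to a constant rather than blowing up; without this restriction the Gaussian-like estimate $(1-k/n)^{n/2}\approx e^{-k/2}$ degrades and the argument breaks down. Once the tail estimate $p \geq c'e^{-2k}/\sqrt k$ is in hand, the rest of the proof is a one-line union bound over the independent samples in $T$.
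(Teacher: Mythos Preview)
Your proposal is correct and follows essentially the same approach as the paper: lower bound the single-draw cap probability $p=\Pr[\langle w,v\rangle>\sqrt{k/n}]$ and then use independence of the $|T|$ samples. The only differences are that the paper imports the cap estimate from an external lemma (obtaining the sharper $p\geq e^{-k}/(32\sqrt{k})$, hence failure probability $\leq (1/e)^{e^k}$) whereas you derive a looser $p\geq c'e^{-2k}/\sqrt{k}$ from scratch, which still suffices for the ``with high probability'' claim and, as you note, can be tightened to recover the doubly-exponential bound.
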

    \begin{proof}
       According to Lemma 1 in \cite{caps}, Pr($\braket{v, w}^2\geq\frac{k}{n})\geq \frac{e^{-k}}{16\sqrt{k}}$ for $w\in S^{n-1}$ uniformly at random. We have  Pr($\braket{v, w}\geq\sqrt{\frac{k}{n}})\geq \frac{e^{-k}}{32\sqrt{k}}$ due to the fact that $\braket{v, w}$ could be negative. By the definition of $T$, we have that $$Pr(\forall w\in T:\braket{v, w}\leq\sqrt{\frac{k}{n}})\leq \big(1-\frac{e^{-k}}{32\sqrt{k}}\big)^{{32\sqrt{k}}e^{2k}}=\Big[\Big(1-\frac{1}{32\sqrt{k}e^{k}}\Big)^{32\sqrt{k}e^{k}}\Big]^{e^k}\leq({\frac{1}{e})}^{e^k}.$$ In other words, the probability of all $w$'s in the sample having an inner product with $v$ that is less than $\sqrt{\frac{k}{n}}$, is extremely small. This implies that there exists a $w\in T$ such that $\braket{v, w}\geq\sqrt{\frac{k}{n}}$ with high probability.
    \end{proof}

    Recall that $W_{max}$ is the vector that maximizes the inner product with $c$, then
    $$W_{max}=\alpha\ket{c}+\sqrt{1-\alpha^2}\ket{\sigma},$$
    where $\sigma\bot c$ and $\alpha\geq\sqrt{\frac{k}{n}}$.
    \item Next, Charlie sends the name of $W_{max}$ to Bob. This requires $O(k)$ communication. Bob then computes the following: $$B\ket{W_{max}}=\alpha B\ket{c}+\sqrt{1-\alpha^2}B\ket{\sigma}.$$
    \item Alice and Bob then jointly estimate the inner product between $B\ket{W_{max}}$ and $a$ by using the protocol proposed by Kremer, Nisan and Ron \cite{KNR}.
    \begin{fact}[Inner Product Estimation Protocol by Kremer, Nisan and Ron \cite{KNR}]
       The inner product estimation protocol approximates the inner product between two vectors from $S^{n-1}$ within $\epsilon$ additive error, which requires communication $O(\frac{1}{\epsilon^2})$.
    \end{fact}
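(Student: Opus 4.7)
The plan is to reconstruct the original one-round protocol of Kremer, Nisan and Ron, which estimates the inner product of two unit vectors via random projections and sign agreements. Using the shared public coins, Alice and Bob will sample $m = \Theta(1/\epsilon^2)$ independent uniformly random vectors $g_1,\dots,g_m$ on $S^{n-1}$ (or equivalently, standard Gaussian vectors in $\mathbb{R}^n$). Alice computes the single bit $s_i := \text{sign}(\langle a, g_i\rangle)$ for each $i$ and transmits the string $(s_1,\dots,s_m)$ to Bob; Bob computes $t_i := \text{sign}(\langle c, g_i\rangle)$ on his side. The total communication is exactly $m = O(1/\epsilon^2)$ bits, as required.

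For the analysis I would invoke the standard Goemans--Williamson identity: by the rotational symmetry of the Haar measure,
$$\Pr_{g}\bigl[\text{sign}(\langle a, g\rangle) \neq \text{sign}(\langle c, g\rangle)\bigr] \;=\; \theta/\pi,$$
where $\theta = \arccos(\langle a, c\rangle) \in [0,\pi]$ is the angle between $a$ and $c$. Thus the indicators $X_i := \mathbf{1}[s_i = t_i]$ are i.i.d.\ Bernoulli random variables with mean $1 - \theta/\pi$. Bob will form the empirical agreement rate $\hat p := \frac{1}{m}\sum_i X_i$, unwrap to $\hat\theta := \pi(1-\hat p)$, and output $\cos\hat\theta$ as his estimate of $\langle a, c\rangle$.

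To finish, I would apply Hoeffding's inequality to conclude that $|\hat p - (1 - \theta/\pi)| \le \epsilon/\pi$ except with small constant probability as soon as $m = \Omega(1/\epsilon^2)$; this gives $|\hat\theta - \theta| \le \epsilon$, and since $\cos$ is $1$-Lipschitz on $[0,\pi]$ the output satisfies $|\cos\hat\theta - \langle a, c\rangle|\le \epsilon$. I do not anticipate a real obstacle: the only technical care needed is (i) modelling the infinite-precision public coins, which is dealt with by quantizing each $g_i$ finely enough that the two signs are preserved with probability $1$ (the bad set has Haar measure $0$ for fixed $a,c$), and (ii) boosting the success probability from a constant to any desired $1-\delta$, which costs only an additional multiplicative $O(\log(1/\delta))$ factor in $m$ and may be omitted if a constant error probability is acceptable.
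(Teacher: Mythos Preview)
The paper does not give its own proof of this statement: it is stated as a \emph{Fact} with a citation to~\cite{KNR} and used as a black box inside the upper-bound protocol. There is therefore nothing in the paper to compare your argument against. Your reconstruction is correct and is indeed the standard random-hyperplane protocol underlying the Kremer--Nisan--Ron result: shared random directions, one sign bit per direction, the Goemans--Williamson identity $\Pr[\text{signs differ}]=\theta/\pi$, Hoeffding to estimate $\theta$ to additive $\epsilon$, and $1$-Lipschitzness of $\cos$ to transfer the error to the inner product. The two technical caveats you flag (infinite-precision public randomness and confidence boosting) are handled exactly as you describe and are not obstacles.
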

    \begin{align*}
        \begin{split}
            \bra{a}B\ket{W_{max}}
            & =\alpha \bra{a}B\ket{c}+\sqrt{1-\alpha^2}\bra{a}B\ket{\sigma}\\
            & = \pm\alpha+\sqrt{1-\alpha^2}\bra{a}B\ket{\sigma},\\
        \end{split}
    \end{align*}
    where $\sqrt{1-\alpha^2}\bra{a}B\ket{\sigma}=0$ since $\sigma\bot c$ and $B^T a$ is either equal to $c$ or $-c$. That is to say,
    \[\bra{a}B\ket{W_{max}}
          \begin{cases}
            \geq \sqrt{\frac{k}{n}},\hspace{1mm} \text{for}\hspace{1mm} \text{1-inputs}\\
            \leq -\sqrt{\frac{k}{n}},\hspace{1mm} \text{for}\hspace{1mm}\text{{-1}-inputs}.\\
         \end{cases}
        \]
    Setting $\epsilon$ to be smaller than $\sqrt{\frac{k}{n}}$, say $\frac{1}{100}\sqrt{\frac{k}{n}}$ to allow for sufficient separation between {-1}- and 1-inputs, Kremer, Nisan and Ron's protocol requires $O(\frac{n}{k})$ communication.
    \end{enumerate}
    In order to minimize the total amount of communication ($O(k)$ in Step 2 and $O(\frac{n}{k})$ in Step 3), we set $k=\sqrt{n}$. Therefore, the total amount of communication required for the protocol equals $O(\sqrt{n})$.

We now describe the modifications necessary to allow us to approximate $a^TBc$ within additive error $\epsilon$.
In step 2 Charlie also sends the value of $\alpha=\langle c| W_{max}\rangle$, as a number with precision $1/poly(n)$, using $O(\log n)$ additional bits of communication.

In step 3 it is no longer true that $\sigma\perp B^Ta$, but $\sigma$ is a uniformly random vector from $S^{n-1}\cap c^\perp$.
This implies that $|\bra{a}B\ket{\sigma}|\leq 10/\sqrt n$ with probability at least $1-2e^{-50}$ by standard estimates on the area of spherical caps and the observation that $B^Ta=\bra{a}B\ket{c} c +\sqrt{1-\bra{a}B\ket{c}^2} \theta$ for some unit vector $\theta\perp c$. $\theta$ is a fixed vector orthogonal to $c$ and $\sigma$ is a random vector orthogonal to $c$.
Hence $\bra{a}B\ket{\sigma}=\sqrt{1-\bra{a}B\ket{c}^2}\langle \theta \ket{\sigma}$, and hence is smaller in absolute value than the inner product with a random vector, unless $B^Ta\perp c$, in which it is equally large.

The players may hence work in stage 3 as before, but with error $\frac{1}{100}\alpha\cdot\epsilon$. The obtained result is
scaled by multiplying by $1/\alpha$ and used as the output.

\begin{theorem}
For every $\epsilon>0$ there is a 3 player number-in-hand one-way protocol that approximates the value of $a^TBc$ with additive error $\epsilon$ and communication $O(\sqrt n/\epsilon^2)$.

Furthermore $R^{C\to B\to A}(aBc)=O(\sqrt n)$.
\end{theorem}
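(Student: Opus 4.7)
The proof is essentially an assembly of the three-step protocol and the error analysis laid out just above, so my plan is to state the protocol compactly with $k=\sqrt n$, verify that Alice's output is within $\epsilon$ of $a^TBc$, and add up the bits.

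The protocol. Fix $k=\sqrt n$. (i) Via the public coin Charlie and Bob share a sample $T\subset S^{n-1}$ of $|T|=2^{O(\sqrt n)}$ uniform unit vectors; Charlie selects $W_{max}\in T$ maximizing $\langle W_{max},c\rangle$, so by the lemma above $\alpha:=\langle W_{max},c\rangle\geq\sqrt{k/n}=n^{-1/4}$ with high probability. (ii) Charlie sends Bob the name of $W_{max}$ in $O(\sqrt n)$ bits together with a $1/\mathrm{poly}(n)$-accurate discretization $\tilde\alpha$ of $\alpha$ in $O(\log n)$ bits. (iii) Bob computes $BW_{max}$ and runs the Kremer--Nisan--Ron protocol with Alice to approximate $\langle a,BW_{max}\rangle$ with additive error $\alpha\epsilon/100$, which by Fact 1 costs $O(1/(\alpha\epsilon)^2)=O(n/(k\epsilon^2))=O(\sqrt n/\epsilon^2)$ bits; Alice outputs the estimate divided by $\tilde\alpha$ as her approximation of $a^TBc$.

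Correctness. Decompose $W_{max}=\alpha c+\sqrt{1-\alpha^2}\sigma$, where $\sigma$ is uniform on the unit sphere of $c^\perp$ conditional on $\alpha$, by rotational symmetry of the Haar measure around the axis $c$. Then
\[\langle a,BW_{max}\rangle=\alpha\langle a,Bc\rangle+\sqrt{1-\alpha^2}\langle a,B\sigma\rangle.\]
Writing $B^Ta=\langle a,Bc\rangle\,c+\sqrt{1-\langle a,Bc\rangle^2}\,\theta$ for the unique unit vector $\theta\perp c$ in the span of $B^Ta$ and $c$, one obtains $\langle a,B\sigma\rangle=\sqrt{1-\langle a,Bc\rangle^2}\langle\theta,\sigma\rangle$, and the standard spherical-cap estimate gives $|\langle\theta,\sigma\rangle|\leq 10/\sqrt n$ except with exponentially small probability. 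Dividing the KNR estimate by $\tilde\alpha$ and absorbing $|\tilde\alpha-\alpha|\leq 1/\mathrm{poly}(n)$ therefore yields an output whose distance from $a^TBc$ is at most $\epsilon/100+O(1/\sqrt k)+O(1/\mathrm{poly}(n))$, which is at most $\epsilon$ for $n$ sufficiently large; the finite remaining range of $n$ is handled by sending $a$ or $c$ verbatim in $O(n)$ bits, absorbed into the big-$O$.

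Totaling (ii) and (iii) gives $O(\sqrt n/\epsilon^2)$, establishing the first assertion. For $R^{C\to B\to A}(aBc)=O(\sqrt n)$ it suffices to fix $\epsilon$ to any constant in $(0,1)$, say $1/4$: the output then distinguishes $a^TBc=1$ from $a^TBc=-1$ by sign, and the total cost collapses to $O(\sqrt n)$. I do not expect a serious obstacle; the two care points are that $\tilde\alpha$ survives the division (which is fine because $\alpha\geq n^{-1/4}$) and that the spherical-cap bound is uniform over Bob's input $B$ (which holds because $\theta$ is determined by $a$ and $B$ while $\sigma$ is, conditional on $\alpha$, uniform on the unit sphere of $c^\perp$ independently of $B$).
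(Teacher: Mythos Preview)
Your proposal is correct and follows essentially the same approach as the paper: share a public-coin net $T$ of size $2^{O(\sqrt n)}$, have Charlie name $W_{max}$ and send $\alpha$, let Bob apply $B$ and run KNR with Alice at precision $\Theta(\alpha\epsilon)$, then rescale by $1/\tilde\alpha$. Your error accounting is in fact slightly more explicit than the paper's---you separate out the $O(1/\sqrt k)=O(n^{-1/4})$ contribution from the cross term $\sqrt{1-\alpha^2}\langle a,B\sigma\rangle/\alpha$ and note that small $n$ is handled trivially---while the paper leaves these points implicit; and your justification that $\sigma$ is uniform on $S^{n-1}\cap c^\perp$ conditional on $\alpha$ (by rotational symmetry of the argmax) and independent of $B$ is exactly the point the paper is relying on.
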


We also note that the above protocol for $aBc$ can be used so that Charlie sends $k$ bits of communication, and Bob $O(n/k)$ bits for any $k$ larger than some constant.
In particular there is a one-way protocol in which Charlie sends $O(n^{2/3})$ bits and Bob $O(n^{1/3})$ bits. In our lower bound we will show that this is optimal, i.e., that with less communication from one of the players the error must be large. In that sense our lower bound is tight, but only for a certain tradeoff.

\subsection{Data-streaming}

We now consider data-streaming algorithms for approximating $a^TBc$ efficiently within additive error $\epsilon$, and for deciding the $aBc$ problem. It is trivial to compute $a^TBc$ exactly using $O(n\log n)$ space, where we assume that first on the data-stream $c$ is presented entry-wise with precision $1/poly(n)$, then $B$ row by row entry-wise, then $a$.

 A first idea would be to follow the approach of Alon, Matias, and Szegedy \cite{ams:frequency} in their algorithm for moment estimation, which readily adapts to the estimation of inner products. Their algorithm  is randomized and computes the inner product $a^Tc$ {\bf in expectation}. Luckily, the variance can also be bounded, and as a result with enough parallel repetitions one can approximate the inner product to within additive error $\epsilon$ with space $O(\log n/\epsilon^2)$.

 Their algorithm uses the following estimator. Let $V=\{v_1,\ldots, v_h\}$ denote a set of $h=O(n^2)$ four-wise independent random variables, where each $v_j(i)$ is 1 or -1.
 A construction of such a family is presented in \cite{four-wise} based on the parity check matrices of BCH codes.
 One first chooses a random $v$ from $V$, and then on input vectors $a,c$ computes $\sum v(i) a_i$ as well as $\sum v(i) c_i$, and multiplies those. Using $O(1/\epsilon^2)$ of these estimators and averaging appropriately leads to a good approximation of $a^Tc$.

Attempting to generalize this algorithm to the $a^TBc$ scenario, one can consider using the following estimator:
\[\left(\sum_i v(i)a_i\right)\left(\sum_{i,j} v(i)w(j) B_{i,j}\right)\left(\sum_j w(j) c_j\right),\]
where $v,w$ are random vectors from an 8-wise independent family.

Using this one still obtains an algorithm that computes $a^TBc$ in expectation, but the variance can be shown to be $n+O(1)$, making the result completely unreliable unless one is willing to make $O(n)$ parallel repetitions, using space $O(n\log n)$ and hence not outperforming the trivial algorithm.

However, something better can be achieved by trying to emulate our communication protocol and combining the first step of the protocol with the idea for computing simple inner products. The problem with the communication protocol is that we cannot store random vectors from the sphere efficiently. Step 1 of the protocol can be considered as putting a net on the sphere
and finding the closest vector from the net. We try something similar.

We describe an efficient streaming algorithm for the decision version $aBc$.
While processing $c$ the algorithm maintains a set of $100\sqrt n$ positions $i$ with largest $|c_i|$, as well as the corresponding $c_i$. This can be achieved easily by keeping them sorted, removing the smallest positions when necessary.
 After $c$ has streamed we normalize the resulting vector $\tilde{c}$ of at most $100\sqrt n$ non-zero positions.
 The normalized vector $\bar{c}$ is our ``approximation'' of $c$.

The key observation is that $\langle c\ket{\tilde{c}}\geq 100/\sqrt n$ and that $c-\tilde{c}\perp\tilde{c}$.
Then $\langle c\ket{\bar{c}}=\alpha$ with $\alpha\geq \sqrt{100/\sqrt n}=10n^{-1/4}$, because $||\tilde{c}||_2^2=\langle c\ket{\tilde{c}}$.
Hence $\bar{c}=\alpha c+\sqrt{1-\alpha^2}\hat{c}$, where $\hat{c}\perp c$.

Our goal is then to approximate the inner product $\bra{a}B\ket{\bar{c}}$ within some error.
Note that \[\bra{a}B\ket{\bar{c}}=a^TB (\alpha c+\sqrt{1-\alpha^2}\hat{c}),\]
hence \[\bra{a}B\ket{\bar{c}}=\alpha a^TBc + \sqrt{1-\alpha^2}a^TB \hat{c}).\]
But $B^Ta$ is either $c$ or $-c$, hence orthogonal to $\hat{c}$, and hence the second term vanishes.
So $\bra{a}B\ket{\bar{c}}=\alpha a^TBc$, for a known $\alpha\geq 10n^{-1/4}$.

We may now use the estimator \[\left(\sum_i v(i) a_i\right)\left(\sum_i v(i)(B\bar{c})_i\right),\]
where each $(B\bar{c})_i$ can be computed from the stored $\bar{c}$ while row $i$ of $B$ is streamed.

The error analysis is exactly as in \cite{ams:frequency} and we need to use additive error at most $n^{-1/4}$ in order to see whether $a^TBc$ is 1 or -1.
This leads to an algorithm using space $O(\sqrt n\log n)$.

\begin{theorem}
There is a data-streaming algorithm for the (decision) problem $aBc$, which succeeds with high probability and uses space $O(\sqrt n\log n)$.
\end{theorem}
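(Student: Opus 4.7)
The algorithm and its correctness ingredients are fully described in the paragraphs preceding the theorem; the plan is to package them as a formal proof in three short steps: a structural lemma on the truncation, the reduction of $\langle a, B\bar c\rangle$ to $\pm\alpha$ using the promise, and the AMS second-moment analysis together with space accounting.

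For the structural step, since $c\in S^{n-1}$ satisfies $\sum_i c_i^2 = 1$, the vector $\tilde c$ obtained by keeping the $100\sqrt n$ coordinates of largest magnitude satisfies $\|\tilde c\|_2^2 \geq 100/\sqrt n$, because each kept $c_i^2$ is at least the coordinate average $1/n$. Since $\tilde c$ agrees with $c$ on its support, $\langle c,\tilde c\rangle = \|\tilde c\|_2^2$, and therefore $\alpha := \langle c,\bar c\rangle = \|\tilde c\|_2 \geq 10\, n^{-1/4}$. Decomposing $\bar c = \alpha c + \sqrt{1-\alpha^2}\,\hat c$ with $\hat c\perp c$ and using the promise $B^T a = \pm c$, one obtains $\langle B^T a,\hat c\rangle = 0$, hence $\langle a, B\bar c\rangle = \alpha\, a^T B c = \pm\alpha$.

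To estimate $\langle a, B\bar c\rangle$ within additive error $\alpha/4$ I would use the AMS estimator $Y_v = \bigl(\sum_i v(i)a_i\bigr)\bigl(\sum_i v(i)(B\bar c)_i\bigr)$ with $v$ drawn from a four-wise independent $\pm 1$ family of seed length $O(\log n)$ \cite{four-wise}. A standard second-moment computation gives $\mathbb{E}[Y_v] = \langle a, B\bar c\rangle$ and $\mathrm{Var}(Y_v) \leq \|a\|_2^2\|B\bar c\|_2^2 + \langle a,B\bar c\rangle^2 \leq 2$, using that $B$ is orthogonal and $\|\bar c\|_2 = 1$. Averaging $T = O(1/\alpha^2) = O(\sqrt n)$ independent copies and applying Chebyshev gives error $\alpha/4$ with constant probability; a standard median-of-independent-runs boost (cf.\ \cite{ams:frequency}) makes the failure probability $o(1)$ without changing the asymptotic space.

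The remaining point is the stream order $c, B, a$: because $\bar c$ is fixed before $B$ streams, when row $i$ of $B$ arrives we can compute $(B\bar c)_i = \sum_{j\in\mathrm{supp}(\bar c)} B_{ij}\bar c_j$ on the fly and add $v(i)(B\bar c)_i$ to a running accumulator for each copy; likewise $\sum_i v(i) a_i$ is accumulated while $a$ streams. Storing $\bar c$ takes $O(\sqrt n\log n)$ bits; the $O(\sqrt n)$ seeds and accumulators together take $O(\sqrt n\log n)$ bits; the overall space is $O(\sqrt n\log n)$. The main thing that needs care is the stream-order bookkeeping and the verification that $\alpha$ is genuinely at least $\Omega(n^{-1/4})$ no matter how the mass of $c$ is distributed; beyond that the proof reduces to the analysis of \cite{ams:frequency} and no new probabilistic tool is required.
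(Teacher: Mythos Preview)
Your proposal is correct and follows the paper's approach essentially verbatim: truncate $c$ to its $100\sqrt n$ heaviest coordinates, normalize, use the promise $B^Ta=\pm c$ to kill the orthogonal part, and estimate the resulting simple inner product via the AMS four-wise-independent sketch with additive accuracy $\Theta(n^{-1/4})$.

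One small slip to fix: the sentence ``each kept $c_i^2$ is at least the coordinate average $1/n$'' is false as written (take $c=e_1$; then all but one kept coordinate is zero). What is true, and is what you need, is that the \emph{average} of the top $k$ values of the nonnegative sequence $(c_i^2)$ is at least the overall average $1/n$; equivalently, $\sum_{i\in S} c_i^2 \ge \frac{|S|}{n}\sum_i c_i^2 = 100/\sqrt n$ whenever $S$ indexes the $100\sqrt n$ largest entries. The paper itself simply asserts $\langle c,\tilde c\rangle \ge 100/\sqrt n$ without argument, so your write-up with this one-line correction is actually more complete than the original.
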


We conjecture that an algorithm exists that approximates $a^TBc$ with additive error $\epsilon$ and complexity $O(\sqrt n\log n/\epsilon^2)$.
The main task there is to find efficiently computable nets on the sphere. Note that in our regime of parameters these nets are small and closeness needs to be defined via inner products.

\section{Equator Sampling Theorems}
In this section, we show that a density function $f$ of a distribution when restricted to a random equator (and normalized) is as close to the uniform distribution as $f$ is on the whole sphere with high probability. The natural approach is to use the concept of R\'enyi divergence, which characterizes how "close"  probability distributions. In particular, we show that with high probability,
$$|D_\alpha(\bar{f}_{|H}||{unif}_{S^{n-1}\bigcap H})-D_\alpha(f||unif)|\leq t,$$
for $0<t<1$ and $\alpha>1$. We also show a slightly different result for $\alpha=1$ and another result about the divergence of $f$ from $g$ instead of  $f$ from $unif$.

We draw on \cite{KR} but modify their notions and results to suit our context of R\'enyi divergence. The main modification to their approach unfortunately is right in the core of their proof, so we need to reproduce two of their technical results with the necessary modifications. We keep notations similar to ease comparison.

The following lemma is an improved version of Lemma 5.3 in \cite{KR}. We provide an upper bound on the projection length of a density $f$ onto the space $S_k$ in terms of $D_2(f||unif)$. Our main modification here is to prove an upper bound on the $p$-norm (for $1\leq p\leq 2$) of a function in terms of the 2-norm.
\cite{KR} provide such a bound in terms of the $\infty$-norm only, which will not work in our case\footnote{The $\infty$-norm of a density function may be large in general and then their result can't be applied.}.

\begin{lem}\label{5.3}
    Considering a density function $f:S^{n-1}\rightarrow\mathbb{R}$ for any $k>1$ we have
    $$||Proj_{S_k}f||_2\leq\Bigg(e\cdot \max\Bigg(1, \frac{D_2(f||unif)}{\lambda_k/(2n-2)}\Bigg)\Bigg)^{\frac{\lambda_k}{2n- 2}},$$
    where the $-\lambda_k$ are eigenvalues of the spherical Laplacian.
\end{lem}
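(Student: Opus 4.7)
The plan is to combine hypercontractivity with an $L^1$--$L^2$ interpolation, replacing Klartag--Regev's $\|f\|_\infty$ bound by an inequality tailored for densities. Write $g = \mathrm{Proj}_{S_k} f$, $\beta = \lambda_k/(2n-2)$, and $M = \tfrac{1}{2} D_2(f\|\mathrm{unif}) = \ln \|f\|_2$. The goal is to show $\|g\|_2 \leq (e\max(1, 2M/\beta))^\beta$. My first move is to use that the projection is self-adjoint to write $\|g\|_2^2 = \langle \mathrm{Proj}_{S_k} f, g\rangle = \langle f, g\rangle$, and then apply H\"older's inequality with conjugate exponents $p \in (1,2]$, $q \in [2,\infty)$:
\[
\|g\|_2^2 \;\leq\; \|f\|_p \cdot \|g\|_q.
\]

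Next I would use that $g \in S_k$ is an eigenfunction of the noise operator, $U_\rho g = \rho^{\lambda_k} g$. Applying the hypercontractive inequality (\ref{hyper}) with the pair $(2,q)$ at the critical value $\rho = (1/(q-1))^{1/(2n-2)}$ gives $\|U_\rho g\|_q \leq \|g\|_2$, hence
\[
\|g\|_q \;=\; \rho^{-\lambda_k}\|U_\rho g\|_q \;\leq\; (q-1)^{\beta}\|g\|_2 \;=\; (p-1)^{-\beta}\|g\|_2.
\]
Substituting into the H\"older bound and cancelling one factor of $\|g\|_2$ yields the key intermediate inequality
\[
\|g\|_2 \;\leq\; (p-1)^{-\beta}\,\|f\|_p.
\]

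The crucial new step (the replacement for Lemma~5.3 of \cite{KR}) is to estimate $\|f\|_p$ for a density using only the $L^2$ norm. By log-convexity of $L^r$ norms (equivalently, one application of H\"older) one has $\|f\|_p \leq \|f\|_1^{1-\theta}\|f\|_2^{\theta}$ with $\theta = 2(p-1)/p$; since $f$ is a density, $\|f\|_1 = 1$, so $\|f\|_p \leq \|f\|_2^{2(p-1)/p} = e^{2M(p-1)/p}$. Combining, we get
\[
\ln \|g\|_2 \;\leq\; -\beta\ln(p-1) \;+\; \frac{2M(p-1)}{p}.
\]

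Finally I would optimize over $p \in (1,2]$ in two cases. If $2M \leq \beta$, take $p = 2$: the first term vanishes and the second gives $\|g\|_2 \leq \|f\|_2 \leq e^{\beta/2} \leq e^\beta$, matching the bound. If $2M > \beta$, set $p - 1 = \beta/(2M) < 1$: the first term contributes $\beta\ln(2M/\beta)$ and the second is at most $\beta \cdot 2M/(2M+\beta) \leq \beta$, so $\ln \|g\|_2 \leq \beta(1 + \ln(2M/\beta))$, which is precisely $\ln (e \cdot D_2(f\|\mathrm{unif})/\beta)^\beta$. The main obstacle is really this last calibration: the interpolation inequality itself is elementary, but choosing $p-1$ correctly as a function of $D_2(f\|\mathrm{unif})/\beta$ is what converts the raw bound into the clean form stated in the lemma, and it is this flexibility that fails when one tries the cruder $\|f\|_\infty$ route used in \cite{KR}.
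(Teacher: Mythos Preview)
Your proof is correct and arrives at the same key intermediate inequality as the paper, namely $\|\mathrm{Proj}_{S_k} f\|_2 \leq (p-1)^{-\beta}\,\|f\|_p$ for any $p\in(1,2]$, followed by the same optimisation $p-1=\beta/D_2(f\|\mathrm{unif})$. The difference is in how that inequality is reached: the paper applies hypercontractivity $U_\rho:L^p\to L^2$ directly to $f$ (with $\rho=(p-1)^{1/(2n-2)}$) and then uses that the orthogonal projection onto $S_k$ contracts the $L^2$ norm, whereas you pair H\"older duality $\langle f,g\rangle\le\|f\|_p\|g\|_q$ with hypercontractivity $U_\rho:L^2\to L^q$ applied to the eigenfunction $g=\mathrm{Proj}_{S_k}f$. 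Since $U_\rho$ is self-adjoint and $p,q$ are conjugate, these are dual formulations of the same estimate and yield identical bounds. Your interpolation step $\|f\|_p\le\|f\|_2^{2(p-1)/p}$ is the sharp H\"older form; the paper records the slightly weaker $\|f\|_p\le\|f\|_2^{2(p-1)}$ (equivalently $D_p\le D_2$), but both suffice for the stated constant.
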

\begin{proof}
    First, note that for any $p=1+\epsilon\leq 2$,
    \begin{align*}
        ||f||_p & = ||f||_{1+\epsilon}\\
        & = \Bigg(\int_{S^{n-1}}|f|^{1+\epsilon} dx\Bigg)^{1/(1+\epsilon)}\\
        & = \Big(\mathbb{E}[|f|^\epsilon]\Big)^{1/(1+\epsilon)}\\
        & \leq \Big(\mathbb{E}[|f|]\Big)^{\epsilon/(1+\epsilon)}\\
        & = \Big(\int_{S^{n-1}}|f|^2 dx\Big)^{\epsilon/(1+\epsilon)}\\
        & = \Big(||f||_2^2\Big)^{\epsilon/(1+\epsilon)}\\
        & = \Big(||f||_2^2\Big)^{(p-1)/p}\\
        & \leq \Big(||f||_2^2\Big)^{p-1},\\
    \end{align*}
\noindent which is equivalent to
$$D_p(f||unif)\leq D_2(f||unif).$$ In other words, R\'enyi divergences do not decrease when the parameter $p$ is increased. This inequality holds for all pairs of distributions, i.e. one can replace the uniform distribution with any other  distribution  \cite{div}. Furthermore, 2 can be replaced with any number larger than $p$.

Since $||Proj_{S_k}f||_2\leq||f||_2\leq e^{D_2(f||unif)/2}$, the lemma holds for $k$ with $\lambda_k>(n-1)D_2(f||unif)$. So we can assume that $k$ is such that $\lambda_k\leq (n-1) D_2(f||unif)$ from now on. We set $q=2$ using (\ref{hyper}) and get that for any $1\leq p\leq 2$,
\begin{equation*}
||U_\rho f||_2\leq ||f||_p\leq \Big(||f||_2^2\Big)^{p-1},\end{equation*} where $\rho=(p-1)^{1/(2n-2)}$.

When one projects $f$ to the space $S_k$, the result is that for every $1\leq p\leq 2$,
\begin{equation}\label{hyp}
(p-1)^{\frac{\lambda_k}{2n-2}}||Proj_{S_k}f||_2=||Proj_{S_k}(U_\rho f)||_2\leq ||U_\rho f||_2\leq \Big(||f||_2^2\Big)^{p-1}\end{equation}

We may then choose $p=1+\frac{\lambda_k}{(2n-2)D_2(f||unif)}\leq 1.5<2$, which establishes the stated bound.
Note here that the scalar on the left-hand-side of (\ref{hyp}) is from the scaling the noise operator does on $S_k$.
\end{proof}

\begin{remark}
In the proof of Lemma 5.3 in \cite{KR}, the authors use $\ln||f||_\infty$ instead of $D_2(f||unif)$. This is sufficient for flat density functions which are uniform on a subset. However in our case of arbitrary density functions, we cannot put a useful bound on the infinity norm.
\end{remark}

In order to help us prove the concentration bound between $D_\alpha(\bar{f}_{|H}||{unif}_{S^{n-1}\bigcap H})$ and $D_\alpha(f||unif)$, we require the following (modified) theorem from \cite{KR}.
Again the change is to replace $\ln||\cdot||_\infty$ by the 2-R\'enyi divergence.

\begin{theorem}\label{5.2}
Suppose $f, g:S^{n-1}\rightarrow \mathbb{R}^+$ are density functions, and let
$$s=(D_2(f||unif)+1)(D_2(g||unif)+1).$$
Then when $s\leq cn$,
$$\Bigg|\int_{V_n} f(x)g(y)d\sigma V(x, y)-1\Bigg|\leq\frac{Cs}{n},$$
where $C, c>0$ are constants.
\end{theorem}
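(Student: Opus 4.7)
The plan is to expand $f$ and $g$ in spherical harmonics and exploit that the spherical Radon transform acts on $S_k$ by multiplication by the eigenvalue $\mu_k$. Applying (\ref{radon}) and expanding $f = \sum_k Proj_{S_k} f$ and $g = \sum_k Proj_{S_k} g$, one obtains
\[\int_{V_n} f(x)g(y)\, d\sigma_V(x,y) = \int_{S^{n-1}} f \cdot R(g)\, d\sigma = \sum_{k\ge 0} \mu_k \langle Proj_{S_k} f,\ Proj_{S_k} g\rangle,\]
where the inner product is the $L^2(S^{n-1})$ inner product and the cross-terms vanish because spherical harmonics of different degrees are orthogonal. Since $f,g$ integrate to $1$, their projections onto $S_0$ equal the constant function $1$, and since $\mu_0 = 1$, the $k=0$ term contributes exactly $1$. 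Because $\mu_k = 0$ for odd $k$, it suffices to bound $\bigl|\sum_{k\ge 2,\ k\text{ even}} \mu_k \langle Proj_{S_k} f, Proj_{S_k} g\rangle\bigr|$ by $Cs/n$.

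For each even $k\ge 2$, I would apply Cauchy-Schwarz together with Lemma \ref{5.3} to obtain
\[\bigl|\langle Proj_{S_k} f, Proj_{S_k} g\rangle\bigr| \le \bigl(e\max(1, A/r_k)\bigr)^{r_k}\bigl(e\max(1, B/r_k)\bigr)^{r_k},\]
where $A = D_2(f||unif)$, $B = D_2(g||unif)$, and $r_k = \lambda_k/(2n-2)$. Combined with the standard moment estimate $|\mu_{2m}|\le C_0\,(2m-1)!!/n^m$, obtained from the distribution of a single coordinate of a uniform vector on $S^{n-2}$, the $k=2$ term (for which $r_2\approx 1$ and $|\mu_2|\le C/n$) yields a bound of order $e^2(1+A)(1+B)/n \le Cs/n$. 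This already matches the order of magnitude claimed in the theorem; the remaining work is to show that the sum over $k\ge 4$ is of the same or smaller order.

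The main obstacle is therefore controlling the tail $\sum_{k\ge 4}$. I would split the tail into two regimes according to whether $r_k < \max(A,B)$ or $r_k \ge \max(A,B)$. In the first regime the projection bound is of the form $(eA/r_k)^{r_k}$, which is exponential in $r_k$, and must be dominated by the factorial decay of $\mu_k$; here the hypothesis $s\le cn$ (so $A,B\le cn$) is essential, since for smaller $n$ the projection bound would blow up before the Radon eigenvalues could suppress it. In the second regime one has $\|Proj_{S_k}f\|_2 \le e^{r_k}$ and the decay of $|\mu_k|$, roughly $(k/(en))^{k/2}$, is fast enough to produce a geometrically convergent series whose total is subdominant to the $k=2$ contribution. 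Summing both regimes yields the claimed $Cs/n$ bound. This is precisely the step where the replacement of the $\ln\|f\|_\infty$ norm of Klartag-Regev by $D_2(f||unif)$ propagates through the analysis, but no extra analytic input beyond Lemma \ref{5.3} and the moment bound on $\mu_k$ should be required.
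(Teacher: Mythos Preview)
Your proposal is correct and takes essentially the same approach as the paper: spherical-harmonic expansion via (\ref{radon}), the Radon eigenvalues $\mu_k$, Cauchy--Schwarz together with Lemma~\ref{5.3}, isolation of the degree-$2$ term as the dominant $Cs/n$ contribution, and then control of the remaining tail. The only cosmetic difference is that the paper organizes the tail by splitting at $T=\lfloor\delta n\rfloor$ (showing geometric decay below $T$ and deferring the far tail $k\ge T$ to \cite{KR}) rather than by comparing $r_k$ with $\max(A,B)$ as you do.
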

\begin{proofs}
The proof is similar to that of Theorem 5.2 of \cite{KR} but with some alterations.

We recall that $\mu_{2k}=(-1)^k\mathbb{E}[X^{2k}]$, where $X$ is the first entry in an $n-1$ dimensional vector uniformly at random from $S^{n-2}$ with $\mathbb{R}^{n-1}$ being the ambient space. All odd $\mu_{2k+1}$ are set to 0 and $\mu_0=1$.
Lemma 5.4 of \cite{KR} shows that $\mu_k$ is the eigenvalue for the Radon transform on $S_k$, which is an eigenspace. In their Lemma 5.5 $|\mu_k|$  is upper bounded by $(C\frac kn)^{k/2}$ for a constant $C$ when $k\geq 2 $ and $n$ large enough.

By (\ref{radon}) and the fact that $S_k$ is an eigenspace, we have
$$\int_{V_n}f(x)g(y)d\sigma_V(x, y)=\int_{S^{n-1}}fR(g)d\sigma=\displaystyle\sum_{k=0}^\infty\mu_k\int_{S^{n-1}}Proj_{S_k}(f)Proj_{S_k}(g)d\sigma.$$
By the Cauchy–Schwarz inequality and the fact that $\mu_0=1$ and that $Proj_{S_0} (f)$ is the function that is constant 1 (same for $g$) and that $\mu_k=0$ for odd $k$
$$\Bigg|\int_{V_n}f(x)g(y)d\sigma_V(x, y)-1\Bigg|\leq \displaystyle\sum_{k=1}^{\infty}|\mu_{2k}|||Proj_{S_{2k}}f||_2||Proj_{S_{2k}}g||_2.$$
In order to prove the theorem, we have to  show that the sum on the right hand side of the above inequality is at most $C\gamma\beta/n$, where $\gamma=D_2(f||unif)+1$ and $\beta=D_2(g||unif)+1$. Note hat $1\leq\gamma,\beta$ and by the assumption of the theorem $\beta\gamma\leq cn$. We first have to bound the part of the sum where $k$ runs from 1 to T-1, where $T=\floor{\delta n}$, for a sufficiently small positive constant $\delta$. Using our Lemma \ref{5.3} and Lemma 5.5 of \cite{KR}, we have the following upper bounds:
$$|\mu_{2k}|\leq (\frac{Ck}{n})^k,$$
$$||Proj_{S_{2k}} f||_2\leq \Big(C\cdot \max\Big\{1, \frac{\gamma}{k}\Big\}\Big)^{\frac{\lambda_{2k}}{(2n-2)}},$$
and similarly for $g$ and $\beta$. Hence,
$$\displaystyle\sum_{k=1}^{T-1}|\mu_{2k}|||Proj_{S_{2k}}f||_2||Proj_{S_{2k}}g||_2\leq \displaystyle\sum_{k=1}^{T-1}\big(\frac{Ck}{n}\big)^k\Big(C\cdot\max\Big\{1,\frac{\gamma}{k}\Big\}\Big)^{\frac{\lambda_{2k}}{(2n-2)}}
\Big(C\cdot\max\Big\{1,\frac{\beta}{k}\Big\}\Big)^{\frac{\lambda_{2k}}{(2n-2)}}$$

Considering the $k=1$ term we get that it is at most $C\beta\gamma/n$. Here we use that $\beta,\gamma\geq 1$. The remaining proof is the same as in \cite{KR}: For the sum up to $T-1$ one can show that the terms decay geometrically.
Furthermore, the sum from $T$ to $\infty$ can be treated as in \cite{KR} as well.

\end{proofs}

We now prove the first of our  main results about equator sampling.
In general we use the above techniques similarly to \cite{KR} but with our improved Theorem \ref{5.2}. The main idea is to apply it to various normalized nonnegative functions instead of density functions that are uniform on a subset. This allows us to show concentration results for a much richer class of functions.
Another difference is that we usually need to check a number of criteria instead of just one.

\begin{theorem}\label{5.1}
For any $0<t<1$ and $\alpha>1$,
$$\Pr\big[\big|D_\alpha(\bar{f}_{|H}||unif_{S^{n-1}\bigcap H})-D_\alpha(f||unif)\big|\geq t\big]\leq Be^{-bnt(\alpha-1)/\alpha  (D_{2\alpha}(f||unif)+1)},$$
for some constants $B,b>0$ independent of $\alpha$.
\end{theorem}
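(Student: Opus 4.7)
The plan is to convert the divergence statement into a ratio of Radon transform values and then apply Theorem \ref{5.2} twice, once for the numerator and once for the denominator, via a standard indicator-test trick. Writing $H = y^\perp$ with $y \in S^{n-1}$ uniform, one has $\bar{f}_{|H}(x) = f(x)/R(f)(y)$ on $S^{n-1}\cap H$, so a direct calculation gives
\[
e^{(\alpha-1)D_\alpha(\bar f_{|H}\|unif_{S^{n-1}\cap H})} \;=\; \frac{R(f^\alpha)(y)}{R(f)(y)^\alpha},\qquad e^{(\alpha-1)D_\alpha(f\|unif)} \;=\; M_\alpha \;:=\; \int f^\alpha\,d\sigma.
\]
Thus it suffices to show that with the stated probability, $R(f)(y)$ is within $v$ of $1$ and $R(f^\alpha)(y)$ is within a factor $1\pm u$ of $M_\alpha$, for suitable small $u,v$.

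The main lemma I would establish first is a one-shot concentration statement: for any density $g$ on the sphere and any $u\in(0,1)$,
\[
\Pr_y\bigl[\,|R(g)(y)-1|\ge u\,\bigr] \;\le\; B\,e^{-bnu/(D_2(g\|unif)+1)}.
\]
To prove the upper tail, let $A^+=\{y: R(g)(y)\ge 1+u\}$ with $p=\sigma(A^+)$ and set $h=\mathbf 1_{A^+}/p$. By (\ref{radon}), $\int_{V_n} g(x)h(y)\,d\sigma_V = \int h\,R(g)\,d\sigma \ge 1+u$. By Example \ref{eg1}, $D_2(h\|unif)=\ln(1/p)$, so Theorem \ref{5.2} applied to the pair $(g,h)$ gives $u \le C(D_2(g\|unif)+1)(\ln(1/p)+1)/n$, which rearranges to the claimed bound on $p$; the lower tail is symmetric. (If the hypothesis $s\le cn$ of Theorem \ref{5.2} fails, $\ln(1/p)$ is already so large that the conclusion holds for free.)

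Applying this lemma twice is the core of the argument. With $g=f$, monotonicity of R\'enyi divergences ($D_2 \le D_{2\alpha}$) gives $\Pr[|R(f)(y)-1|\ge v]\le Be^{-bnv/(D_{2\alpha}(f\|unif)+1)}$. With $g=f^\alpha/M_\alpha$, I compute
\[
D_2(g\|unif) \;=\; \ln\frac{M_{2\alpha}}{M_\alpha^2} \;=\; (2\alpha-1)D_{2\alpha}(f\|unif) - 2(\alpha-1)D_\alpha(f\|unif) \;\le\; 2\alpha\bigl(D_{2\alpha}(f\|unif)+1\bigr),
\]
yielding $\Pr[|R(f^\alpha)(y)/M_\alpha-1|\ge u]\le Be^{-bnu/(\alpha(D_{2\alpha}(f\|unif)+1))}$. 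The extra factor of $\alpha$ in the denominator is exactly what produces the $\alpha$ in the theorem's exponent.

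To finish, choose $u = c_1(\alpha-1)t$ and $v=c_2(\alpha-1)t/\alpha$ for small absolute constants $c_1,c_2$. On the intersection of both concentration events, linearizing logarithms gives
\[
\Bigl|\ln\tfrac{R(f^\alpha)(y)}{M_\alpha R(f)(y)^\alpha}\Bigr| \;\le\; 2u + 2\alpha v \;\le\; (\alpha-1)t,
\]
which is the target. A union bound produces the claimed probability, since both exponents are $\ge bnt(\alpha-1)/(\alpha(D_{2\alpha}(f\|unif)+1))$ up to constants. The main obstacle is bookkeeping the $\alpha$-dependence cleanly: in particular, ensuring that the bound on $D_2(f^\alpha/M_\alpha\|unif)$ costs only a factor $\alpha$ (not $\alpha^2$) relative to $D_{2\alpha}(f\|unif)+1$, and handling the regime where $(\alpha-1)t$ is not small so the naive linearization $\ln(1\pm x)\approx \pm x$ requires a secondary case split (in which the stronger exponential bound already dominates).
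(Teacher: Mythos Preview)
Your proposal is correct and is essentially the paper's own argument, just packaged more cleanly: your ``one-shot concentration lemma'' is exactly the indicator-test trick the paper uses in its Claims~\ref{c1} and~\ref{c2} (applied to $g=f$ and $g=f^\alpha/M_\alpha$ respectively), and your log-combination step $|\ln(R(f^\alpha)/M_\alpha)|+\alpha|\ln R(f)|$ is the content of the paper's Claim~\ref{c3} plus the final triangle-inequality union bound. The only substantive difference is organizational---you isolate the lemma once and apply it twice rather than writing out two separate claims---and the ``secondary case split'' you flag for $(\alpha-1)t$ not small is a point the paper itself glosses over when it performs the substitution $t/(\alpha-1)\mapsto t$ in equation~(\ref{N}).
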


Note that the probability statement includes the case where $\bar{f}_{|H}$ does not exist, because $f$ integrates to 0 on the sphere in $H$.

\begin{proof}
We prove the following claims:

\begin{claim}\label{c1}
With high probability, little re-normalization is required to make $f_{|H}$ a density function.
\end{claim}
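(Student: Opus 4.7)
\begin{proofs}
The normalization constant of $f_{|H}$ on the equator $S^{n-1}\cap H$ (with $H=y^\perp$) is exactly the Radon transform value $R(f)(y)$, so the claim amounts to showing $R(f)(y)$ is close to $1$ with high probability when $y\in S^{n-1}$ is uniform. By Fubini $\mathbb{E}_y[R(f)(y)]=\int f\,d\sigma=1$, so what we need is a tail bound around this mean.

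My plan is to reduce the tail estimate to Theorem \ref{5.2} via a duality-style argument. Fix $t\in(0,1)$, let $A=\{y\in S^{n-1}:R(f)(y)\geq 1+t\}$ and $p=\sigma(A)$, and define the test density $g=\mathbf{1}_A/p$ (assuming $p>0$; otherwise the upper tail is empty). Then $\int g^2\,d\sigma=1/p$, so $D_2(g||unif)=\ln(1/p)$. By the Radon identity (\ref{radon}),
\[\int_{V_n} f(x)g(y)\,d\sigma_V(x,y)=\int g(y)R(f)(y)\,d\sigma(y)=\frac{1}{p}\int_A R(f)\,d\sigma\geq 1+t.\]
At the same time, provided $(D_2(f||unif)+1)(\ln(1/p)+1)\leq cn$, Theorem \ref{5.2} upper bounds the same integral by $1+C(D_2(f||unif)+1)(\ln(1/p)+1)/n$. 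Comparing these two estimates yields $\ln(1/p)\geq tn/(C(D_2(f||unif)+1))-1$, i.e., $p\leq B\exp(-bnt/(D_2(f||unif)+1))$. The lower tail $R(f)(y)\leq 1-t$, which includes the degenerate case that $f$ integrates to zero on the equator, is handled by the same argument applied to $\{y:R(f)(y)\leq 1-t\}$.

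If the hypothesis of Theorem \ref{5.2} fails, i.e.\ $(D_2(f||unif)+1)(\ln(1/p)+1)>cn$, then this inequality itself already forces $p$ to be smaller than the bound we want, so a short case split closes the argument. I expect the main obstacle to be mostly bookkeeping: measurability of $A$ is routine from Borel measurability of $f$ and hence of $R(f)$, but one must track the corner $p=0$ carefully and reconcile the absolute constants. Note that the bound obtained here involves $D_2(f||unif)$ rather than $D_{2\alpha}(f||unif)$, and is therefore weaker than the exponent in Theorem \ref{5.1}; however this is perfectly adequate for Claim \ref{c1}, since the normalization correction enters as a lower-order perturbation compared to the divergence shift that the subsequent claims must control.
\end{proofs}
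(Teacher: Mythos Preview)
Your argument is correct and is essentially identical to the paper's: both define the bad set of directions $y$ where the equator integral exceeds $1+t$, take $g$ to be the uniform density on that set, feed $f,g$ into Theorem~\ref{5.2}, and read off an upper bound on the measure of the bad set (then repeat for the lower tail). One small slip in your closing commentary: since $D_2(f||unif)\le D_{2\alpha}(f||unif)$ for $\alpha>1$, having $D_2$ in the denominator of the exponent gives a \emph{stronger} tail bound than the one stated in Theorem~\ref{5.1}, not a weaker one.
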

\begin{proof}
Let $E$ be the set of all $y\in S^{n-1}$ for which the hyperplane $H\subset\mathbb{R}^n$ orthogonal to $y$ satisfies
$$\int_{S^{n-1}\bigcap H}f_{|H}dx\geq 1+t,$$
where $0<t<1$. Let $g$ be uniform on $E$, i.e.
$$g(y)=\begin{cases}1/\sigma(E),\text{ on $E$}\\
0,\text{elsewhere}\\
\end{cases}$$
Then,
$$\int_{V_n}f(x)g(y)d\sigma_V(x, y)\geq 1+t,$$
for $V_n$ defined in (\ref{V_n}).
By Theorem \ref{5.2},
$$t\leq \frac{C\cdot (D_2(f||unif)+1)(D_2(g||unif)+1)}{n}.$$
We use here that $(D_2(f||unif)+1)(D_x(g||unif)+1)\leq cn$, because otherwise the bound becomes $t\leq 1$, which is true by the condition of the theorem. We will not mention this technicality in further iterations of this argument.
After rearranging, we get
$$D_2(g||unif)\geq \frac{tn}{C\cdot (D_2(f||unif)+1)}-1.$$
\begin{fact}\label{f1}
If $g$ is a density function uniform on a subset, then we have $D_1(g||unif)=D_2(g||unif)=\cdots=D_\alpha(g||unif)$ for all $\alpha\geq 1$.
\end{fact}

From Fact \ref{f1} and Example \ref{eg1}, we have $D_2(g||unif)=\ln\big(\frac{1}{\sigma(E)}\big)$ and hence
$$\frac{1}{\sigma(E)}\geq e^{\frac{tn}{C\cdot (D_2(f||unif)+1)}-1}$$
$$\sigma(E)\leq e^{-\frac{tn}{C\cdot (D_2(f||unif)+1) }}.$$
Repeating a similar argument for the lower bound on $\sigma(E)$ we obtain the following
\begin{equation}\label{M}
    \Pr\Bigg[\Bigg|\int_{S^{n-1}\bigcap H} f_{|H}dx-1\Bigg|\geq t\Bigg]\leq e^{-\frac{tn}{C\cdot (D_2(f||unif)+1)}},
\end{equation}
which means that with high probability, $f_{|H}$ is close to being a density function.
\end{proof}

\begin{claim}\label{c2}
With high probability, $f_{|H}$ is as close to the uniform distribution as $f$.
\end{claim}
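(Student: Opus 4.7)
The plan is to mimic the structure of Claim~\ref{c1}, but to apply Theorem~\ref{5.2} to a renormalized version of $f^\alpha$ rather than to $f$ itself. Set $M := \int_{S^{n-1}} f^\alpha\, d\sigma = e^{(\alpha-1)D_\alpha(f||unif)}$ and let $h := f^\alpha/M$, which is a probability density on $S^{n-1}$. A direct computation from the definition of R\'enyi divergence gives
$$D_2(h||unif) = (2\alpha-1)D_{2\alpha}(f||unif) - 2(\alpha-1)D_\alpha(f||unif) \leq (2\alpha-1)D_{2\alpha}(f||unif),$$
using $D_\alpha(f||unif) \geq 0$. Hence $D_2(h||unif)+1 \leq 2\alpha(D_{2\alpha}(f||unif)+1)$, exactly the quantity appearing in the denominator of the theorem's exponent.

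The next step exploits the slice identity $\mathbb{E}_y R(f^\alpha)(y) = \int f^\alpha\, d\sigma = M$, equivalently $\mathbb{E}_y R(h)(y) = 1$. With $s := (\alpha-1)t/4$, define the bad set $E^+ := \{y \in S^{n-1} : R(h)(y) \geq 1+s\}$ and let $g := \mathbf{1}_{E^+}/\sigma(E^+)$. By (\ref{radon}),
$$\int_{V_n} h(x)\,g(y)\,d\sigma_V(x,y) = \frac{1}{\sigma(E^+)}\int_{E^+} R(h)(y)\, d\sigma(y) \geq 1+s.$$
Theorem~\ref{5.2} applied to $h$ and $g$, combined with $D_2(g||unif) = \ln(1/\sigma(E^+))$ (Fact~\ref{f1}), then yields
$$\sigma(E^+) \leq B\exp\!\left(-\frac{b(\alpha-1)tn}{4\alpha(D_{2\alpha}(f||unif)+1)}\right),$$
and the same argument applied to $E^- := \{y : R(h)(y) \leq 1-s\}$ handles the lower tail.

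To finish we convert the multiplicative control of $R(f^\alpha)(y) = \int f_{|H}^\alpha\, d\sigma_H$ into an additive bound on $D_\alpha$. Writing $Z_H := \int_{S^{n-1}\cap H} f_{|H}\, d\sigma_H$ and $\bar{f}_{|H} = f_{|H}/Z_H$,
$$D_\alpha(\bar{f}_{|H}||unif_{S^{n-1}\cap H}) = \frac{1}{\alpha-1}\ln\!\int f_{|H}^\alpha\, d\sigma_H - \frac{\alpha}{\alpha-1}\ln Z_H.$$
On the intersection of the events $\{|Z_H - 1| \leq t'\}$ with $t' := (\alpha-1)t/(4\alpha)$ (controlled by Claim~\ref{c1} applied to $f$) and $\{R(h)(y) \in [1-s, 1+s]\}$ (controlled above), the left-hand side differs from $D_\alpha(f||unif)$ by at most $(2s + 2\alpha t')/(\alpha-1) \leq t$, using $|\ln(1\pm x)| \leq 2|x|$ for $|x| \leq 1/2$. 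A union bound over the two tail events and the normalization event then yields the promised tail probability.

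The step we expect to be most delicate is checking that the chosen scales $s = \Theta((\alpha-1)t)$ and $t' = \Theta((\alpha-1)t/\alpha)$ keep us in the regime $(D_2(h||unif)+1)(D_2(g||unif)+1) \leq cn$ where Theorem~\ref{5.2} is nontrivial; outside of it the bound becomes $\geq 1$ and hence holds vacuously, but this case must be treated explicitly (and absorbed into the constant $B$), mirroring the corresponding remark within the proof of Claim~\ref{c1}.
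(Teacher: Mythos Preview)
Your proposal follows essentially the same approach as the paper: define $h = f^\alpha/\int f^\alpha$, bound $D_2(h||unif) \leq (2\alpha-1)D_{2\alpha}(f||unif)$, and apply Theorem~\ref{5.2} to the uniform density on the bad set to obtain exponential tails for $\int_{S^{n-1}\cap H} f^\alpha / \int_{S^{n-1}} f^\alpha$. The only structural difference is scope: the paper's Claim~\ref{c2} stops at the unnormalized restriction $f_{|H}$ (its conclusion is Equation~(\ref{N}) for $D_\alpha(f_{|H}||unif)$), deferring the passage to $\bar{f}_{|H}$ to Claim~\ref{c3}, whereas you fold the normalization and the union bound into the same argument---effectively proving Claims~\ref{c2}, \ref{c3}, and their combination in one pass.
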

\begin{proof}
Let $h=\frac{f^\alpha}{\int_{S^{n-1}} f^\alpha dz}$, where $||h||_1=1$ and let $E$ be the set of all $y\in S^{n-1}$ for which the hyperplane $H\subset \mathbb{R}^n$ orthogonal to $y$ satisfies
$$\int_{S^{n-1}\bigcap H}h\hspace{1mm}dx\geq 1+t,$$
where $0<t<1$. By Theorem \ref{5.2},
$$t\leq \frac{C\cdot (D_2(h||unif)+1)(D_2(g||unif)+1)}{n}.$$
After rearranging, we get
$$D_2(g||unif)\geq \frac{tn}{C\cdot (D_2(h||unif)+1)}-1.$$
Similar to the proof of Claim \ref{c1}, we get
$$\sigma(E)\leq e^{-\frac{tn}{C\cdot (D_2(h||unif)+1)}}.$$
Next, we would like to express the above inequality in terms of $D_{2\alpha}(f||unif)$. Notice that
\begin{align*}
    D_2(h||unif) & =\ln\int_{S^{n-1}\bigcap H} h^2dx=\ln\int_{S^{n-1}\bigcap H}  \Bigg(\frac{f^{2\alpha}}{\big(\int_{S^{n-1}} f^\alpha dz\big)^2}\Bigg)dx\\
    & = \ln\int_{S^{n-1}\bigcap H}f^{2\alpha}dx-2\ln\int_{S^{n-1}}f^\alpha dz\\
    & \leq \ln\int_{S^{n-1}\bigcap H}f^{2\alpha}dx\\
    & = (2\alpha -1)D_{2\alpha}(f||unif).\\
\end{align*}
Therefore,
$$\sigma(E)\leq e^{-\frac{tn}{C\cdot ((2\alpha -1)D_{2\alpha}(f||unif)+1) }}      ,$$
and hence
\begin{align*}
    & \Pr\Bigg[D_\alpha(f_{|H}||{unif}_{S^{n-1}\bigcap H}) - D_\alpha(f||{unif}_{S^{n-1}})\geq \frac{t}{\alpha-1}\Bigg]\\
    & = \Pr\Bigg[\ln\int_{S^{n-1}\bigcap H}f_{|H}^\alpha dx - \ln\int_{S^{n-1}}f^\alpha\geq t\Bigg]\\
    & = \Pr\Bigg[\ln\int_{S^{n-1}\bigcap H}\frac{f^\alpha}{\int_{S^{n-1}}f^\alpha dz}dx\geq t\Bigg]\\
    & \leq \Pr\Bigg[\int_{S^{n-1}\bigcap H}\frac{f^\alpha}{\int_{S^{n-1}}f^\alpha dz}dx\geq 1+ t\Bigg]\\
    & = \Pr\Bigg[\int_{S^{n-1}\bigcap H} h dx \geq 1+t\Bigg]\\
    & \leq e^{-\frac{tn}{C((2\alpha-1)\cdot D_{2\alpha}(f||unif)+1)}}.
\end{align*}
Repeating a similar argument for the lower bound on $\sigma(E)$, we get
$$\Pr\Bigg[\Bigg|D_\alpha(f_{|H}||{unif}_{S^{n-1}\bigcap H}) - D_\alpha(f||{unif}_{S^{n-1}})\Bigg|\geq \frac{t}{\alpha-1}\Bigg]\leq e^{-\frac{tn}{C((2\alpha-1)\cdot D_{2\alpha}(f||unif)+1)}}.$$
By replacing $\frac{t}{\alpha-1}$ with $t$, we get
\begin{equation}\label{N}
    \Pr\Bigg[\Bigg|D_\alpha(f_{|H}||{unif}_{S^{n-1}\bigcap H}) - D_\alpha(f||{unif}_{S^{n-1}})\Bigg|\geq t\Bigg]\leq e^{-\frac{tn(\alpha-1)}{C((2\alpha-1)\cdot D_{2\alpha}(f||unif)+1)}},
\end{equation}
which means with high probability, $f_{|H}$ is as close to the uniform distribution as $f$.
\end{proof}

Next, we show that the R\'enyi divergence of $f_{|H}$ from the uniform distribution changes by at most $\frac{2\alpha}{\alpha-1}\Big|\int_{S^{n-1}\bigcap H} f_{|H}dx-1\Big|$ after normalizing.
\begin{claim}\label{c3}
Assume that $\int_{S^{n-1}\bigcap H}f_{|H} dx=1-\ell$, where $\ell\leq t\leq\frac{1}{2}$, for the $t$ from Claim \ref{c1}. Then, $\bar{f}_{|H}=\frac{1}{1-\ell}f_{|H}$. We have the following bound:
$$D_\alpha(\bar{f}_{|H}||{unif}_{S^{n-1}\bigcap H})-D_\alpha(f_{|H}||{unif}_{S^{n-1}\bigcap H})|\leq\frac{2\alpha}{\alpha-1}\Bigg|\int_{S^{n-1}\bigcap H}f_{|H}dx-1\Bigg|.$$
\end{claim}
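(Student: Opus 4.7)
The plan is to prove Claim~\ref{c3} by direct computation. Since $\bar{f}_{|H}=\frac{1}{1-\ell}f_{|H}$ is just a constant rescaling of $f_{|H}$, the $\alpha$-R\'enyi divergence from the uniform distribution on $S^{n-1}\cap H$ (whose density is the constant function $1$) changes only through a logarithmic term.

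First, I would plug the rescaling directly into the definition of $D_\alpha$:
\begin{align*}
D_\alpha(\bar{f}_{|H}\|unif_{S^{n-1}\cap H})
&=\frac{1}{\alpha-1}\ln\int_{S^{n-1}\cap H}\bar{f}_{|H}^{\,\alpha}\,d\sigma\\
&=\frac{1}{\alpha-1}\ln\!\left[(1-\ell)^{-\alpha}\int_{S^{n-1}\cap H}f_{|H}^{\,\alpha}\,d\sigma\right]\\
&=D_\alpha(f_{|H}\|unif_{S^{n-1}\cap H})-\frac{\alpha}{\alpha-1}\ln(1-\ell).
\end{align*}
Hence
\[
\bigl|D_\alpha(\bar{f}_{|H}\|unif_{S^{n-1}\cap H})-D_\alpha(f_{|H}\|unif_{S^{n-1}\cap H})\bigr|=\frac{\alpha}{\alpha-1}\bigl|\ln(1-\ell)\bigr|.
\]

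Second, I would bound the logarithm. For $|\ell|\leq 1/2$ the elementary inequality $|\ln(1-\ell)|\leq 2|\ell|$ holds: on $\ell\in[0,1/2]$ one has $-\ln(1-\ell)\leq 2\ell$ (verified at the endpoint $\ell=1/2$ since $\ln 2<1$, and by convexity of $-\ln(1-x)-2x$), and on $\ell\in[-1/2,0]$ one has $\ln(1-\ell)=\ln(1+|\ell|)\leq|\ell|\leq 2|\ell|$. Substituting gives
\[
\bigl|D_\alpha(\bar{f}_{|H}\|unif_{S^{n-1}\cap H})-D_\alpha(f_{|H}\|unif_{S^{n-1}\cap H})\bigr|\leq \frac{2\alpha}{\alpha-1}|\ell|=\frac{2\alpha}{\alpha-1}\Bigl|\int_{S^{n-1}\cap H}f_{|H}\,dx-1\Bigr|,
\]
as required.

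There is no serious obstacle here: the proof is essentially a one-line calculation once the definition of the R\'enyi divergence is expanded, followed by the standard logarithmic inequality. The only subtle point worth flagging is that the hypothesis $\ell\leq 1/2$ is used precisely to ensure $(1-\ell)>0$ so the normalization is well-defined, and to justify the linear upper bound on $|\ln(1-\ell)|$; without the $1/2$ cutoff the constant in front would have to blow up. The factor $\alpha/(\alpha-1)$ is unavoidable and is the reason the overall concentration statement degrades as $\alpha\to 1$, consistent with the separate treatment required for the Kullback--Leibler case.
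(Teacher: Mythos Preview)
Your proof is correct and follows essentially the same approach as the paper: expand the definition of $D_\alpha$ for the rescaled density, pull out the constant factor $(1-\ell)^{-\alpha}$ from the integral, and then bound $|\ln(1-\ell)|\leq 2|\ell|$ on $|\ell|\leq 1/2$. Your exposition is in fact slightly cleaner, since you compute the exact difference $-\frac{\alpha}{\alpha-1}\ln(1-\ell)$ in one line and then bound its absolute value for both signs of $\ell$, whereas the paper derives the upper bound and then appeals to a ``similar argument for the lower bound.''
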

\begin{proof}
Observe that
\begin{align*}
    D_\alpha(\bar{f}_{|H}||{unif}_{S^{n-1}\bigcap H}) & = D_\alpha\Big(\frac{1}{1-\ell}f_{|H}\Big|\Big|{unif}_{S^{n-1}\bigcap H}\Big)\\
    & = \frac{1}{\alpha-1}\ln\int_{S^{n-1}\bigcap H}f_{|H}^\alpha\cdot\Big(\frac{1}{\int_{S^{n-1}\bigcap H}f_{|H}dx}\Big)^\alpha dx\\
    & = \frac{1}{\alpha-1}\Big[\alpha\ln\Big(\frac{1}{\int_{S^{n-1}\bigcap H}f_{|H}dx}\Big)+\ln\int_{S^{n-1}\bigcap H}f_{|H}^\alpha dx\Big]\\
    & = \frac{1}{\alpha-1}\Big[-\alpha\ln\Big(\int_{S^{n-1}\bigcap H}f_{|H}dx\Big)+\ln\int_{S^{n-1}\bigcap H}f_{|H}^\alpha dx\Big]\\
    & = \frac{1}{\alpha-1}\Big[-\alpha\ln(1-\ell)+\ln\int_{S^{n-1}\bigcap H}f_{|H}^\alpha dx\Big]\\
    & \leq \frac{2\alpha\ell}{\alpha-1}+D_\alpha(f_{|H}||{unif}_{S^{n-1}\bigcap H})\\
    & = \frac{2\alpha}{\alpha -1}\Big|\int_{S^{n-1}\bigcap H} f_{|H} dx-1\Big|+D_\alpha(f_{|H}||{unif}_{S^{n-1}\bigcap H}).\\
\end{align*}
Applying a similar argument for the lower bound on $D_\alpha(\bar{f}_{|H}||{unif}_{S^{n-1}\bigcap H})$, we prove our claim.
\end{proof}

Now we show that $\bar{f}_{|H}$ is still as close to the uniform distribution as $f$. By combining Claims \ref{c1}, \ref{c2}, \ref{c3} for all $0<t<1$,
\begin{align*}
    &\Pr\Big[\big|D_\alpha(\bar{f}_{|H}||{unif}_{S^{n-1}\bigcap H})-D_\alpha(f||unif)\big|\geq t\Big]\\
    & \leq \Pr\Big[\big|D_\alpha(\bar{f}_{|H}||{unif}_{S^{n-1}\bigcap H})-D_\alpha(f_{|H}||{unif}_{S^{n-1}\bigcap H})\Big|+\Big|D_\alpha(f_{|H}||{unif}_{S^{n-1}\bigcap H})-D_\alpha(f||unif)\Big|\geq t\Big]\\
    & \leq \Pr\Big[\big|D_\alpha(\bar{f}_{|H}||{unif}_{S^{n-1}\bigcap H})-D_\alpha(f_{|H}||{unif}_{S^{n-1}\bigcap H})\Big|
    \geq t/2\Big]\\
    & \hspace{5mm} +\Pr\Big[\big|D_\alpha(f_{|H}||{unif}_{S^{n-1}\bigcap H})-D_\alpha(f||unif)\Big|\geq t/2\Big]\\
    & \leq \Pr\Bigg[\bigg|\int_{S^{n-1}\bigcap H}f_{|H}\hspace{1mm}dx -1\Bigg|\geq \frac{(\alpha-1)t}{4\alpha}\Bigg]+\Pr\Bigg[\bigg|D_\alpha(f_{|H}||{unif}_{S^{n-1}\bigcap H})-D_\alpha(f||unif)\Bigg|\geq t/2\Bigg]\\
    & \leq e^{-\frac{(\alpha-1)nt}{\alpha C(D_2(f||unif)+1)}}+e^{-\frac{(\alpha-1)nt}{\alpha C(D_{2\alpha}(f||unif)+1)}}+\Pr\Big[\Big|\int_{S^{n-1}\bigcap H}f_{|H}dx-1\Big|\leq\frac{1}{2}\Big]\\
    & \leq Be^{-bnt(\alpha-1)/\alpha (D_{2\alpha}(f||unif)+1)}.
\end{align*}
\end{proof}

Note that we can use a limiting argument to get the same result for $D_\infty$, while such an approach fails for $\alpha=1$, arguably the most important case.
Now we consider the R\'enyi divergence for the case where $\alpha=1$.

\begin{theorem}\label{D1}
For any $0<t<1$,
$$\Pr\Bigg[\Bigg|\frac{D_1(\bar{f}_{|H}||{unif}_{S^{n-1}\bigcap H})}{D_1(f||unif)}-1\Bigg|\geq t\Bigg]\leq Ke^{-knt/(D_4(f||unif)+1)},$$
for some constants $K,k>0$.
\end{theorem}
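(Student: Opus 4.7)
My plan is to mirror the three-step structure of the proof of Theorem~\ref{5.1} (a normalization step analogous to Claim~\ref{c1}, a $D_1$-distortion step analogous to Claim~\ref{c2}, and a combination step analogous to Claim~\ref{c3}), while replacing the auxiliary density $h = f^{\alpha}/\int f^{\alpha}$, which degenerates at $\alpha=1$, by a different nonnegative test function. The analog of Claim~\ref{c1} can be carried over verbatim from Theorem~\ref{5.1} (applied at $\alpha = 2$): with probability at least $1 - e^{-\Omega(t n/(D_2(f||unif)+1))}$, the integral $\int_{S^{n-1}\cap H} f_{|H}\,dx$ differs from $1$ by at most $t$. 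The bulk of the new work is the analog of Claim~\ref{c2}.

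The key idea is to use $\psi := (f\ln f + 1/e)/(D_1(f||unif) + 1/e)$ as the test density. Since $x\ln x \geq -1/e$ for all $x\geq 0$, $\psi$ is indeed a bona fide probability density on $S^{n-1}$. The crucial algebraic identity is that the event $\int_{S^{n-1}\cap H} \psi\,dx \geq 1 + \tau$ is equivalent to $D_1(f_{|H}||unif_{S^{n-1}\cap H}) \geq D_1(f||unif) + \tau(D_1(f||unif) + 1/e)$, so concentration of $\int\psi$ around $1$ on a random equator translates directly into a (essentially relative) concentration of $D_1$. I would then apply Theorem~\ref{5.2} to the pair $(\psi, g)$ with $g$ the uniform density on the bad set $E$ of equators, exactly as in Claim~\ref{c2}, obtaining a bound of the form $\sigma(E) \leq \exp\!\big(-\Omega(\tau n/(D_2(\psi||unif)+1))\big)$.

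The main technical calculation, which explains the appearance of $D_4$ rather than $D_2$ in the final statement, is to bound $D_2(\psi||unif)$ in terms of $D_4(f||unif)$. Using $(a+b)^2 \leq 2a^2 + 2b^2$ together with the elementary pointwise inequality $f^2(\ln f)^2 \leq f^4 + e^{-2}$ -- which follows because $(\ln f)^2 \leq f^2$ for $f \geq 1$, while on $(0,1]$ the function $f^2(\ln f)^2$ is maximized at $f = 1/e$ with value $e^{-2}$ -- one obtains $\int (f\ln f)^2\,d\sigma \leq e^{3 D_4(f||unif)} + O(1)$, hence $D_2(\psi||unif) \leq 3D_4(f||unif) + O(1)$. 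Substituting into the Theorem~\ref{5.2} estimate gives the claimed tail $\exp\!\big(-\Omega(\tau n/(D_4(f||unif)+1))\big)$.

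For the combination step (the analog of Claim~\ref{c3}), I expand $D_1(\bar{f}_{|H}||unif_{S^{n-1}\cap H}) = Z^{-1}\int f_{|H}\ln f_{|H}\,dx - \ln Z$ with $Z = \int f_{|H}\,dx$, and use $|Z-1| \leq t'$ from the Claim~\ref{c1} analog to show that the renormalization changes $D_1(f_{|H}||unif_{S^{n-1}\cap H})$ by at most $O(t'(D_1(f_{|H}||unif_{S^{n-1}\cap H})+1))$; choosing $t'$ small enough and rescaling $t$ by a constant factor absorbs this into the relative bound. The hard part will be the pointwise estimate used to control $D_2(\psi||unif)$ by $O(D_4(f||unif))$; it is elementary but is what forces the final bound to feature $D_4$. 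A secondary subtlety, which I will note, is the regime where $D_1(f||unif)$ is very small: the shift by $1/e$ in $\psi$ produces a bound relative to $D_1(f||unif) + 1/e$ rather than to $D_1(f||unif)$ itself, which is equivalent to the stated relative bound up to constants as long as $D_1(f||unif)$ is bounded away from $0$.
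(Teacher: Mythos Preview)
Your proposal is correct and follows essentially the same three-step architecture as the paper's proof: a normalization step (Claim~\ref{c1}), a concentration step for the unnormalized $D_1$ via Theorem~\ref{5.2} applied to an auxiliary density built from $f\ln f$, and a combination step handling the renormalization. The paper makes the same choice of controlling $D_2$ of the auxiliary function by $3D_4(f\|unif)$ through the pointwise estimate $(f\ln f)^2 \lesssim f^4$, and it likewise notes that $D_1(f\|unif)$ must be bounded below by a constant for the relative statement to hold.

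The one genuine difference is your choice of auxiliary density. The paper takes $h = f\ln f / \int f\ln f$ directly, whereas you use $\psi = (f\ln f + 1/e)/(D_1(f\|unif) + 1/e)$. Your version is cleaner: since $x\ln x \geq -1/e$, $\psi$ is an honest nonnegative density, so Theorem~\ref{5.2} applies without caveat. The paper's $h$ satisfies $\int h = 1$ but can be negative where $0<f<1$, so strictly speaking one must check that Lemma~\ref{5.3} and Theorem~\ref{5.2} tolerate signed test functions with $\|h\|_1$ bounded (which they do, up to constants, once $D_1(f\|unif)$ is bounded below, since then $\|h\|_1 \leq 1 + 2/(eD_1)$). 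Your shift by $1/e$ sidesteps this bookkeeping entirely at no real cost, and the caveat about small $D_1$ is the same in both approaches.
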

\begin{proof}
We first prove the following claim:
\begin{claim}\label{c4}
With high probability, $f_{|H}$ is as close to the uniform distribution as $f$.
\end{claim}
\begin{proof}
Let $h=\frac{f\ln f}{\int_{S^{n-1}} f\ln f dz}$, where $||h||_1=1$ and let $E$ be the set of all $y\in S^{n-1}$ for which the hyperplane $H\subset \mathbb{R}^n$ orthogonal to $y$ satisfies
$$\int_{S^{n-1}\bigcap H}h\hspace{1mm}dx\geq 1+t,$$
where $0<t<1$. By Theorem \ref{5.2},
$$t\leq \frac{C\cdot (D_2(h||unif)+1)(D_2(g||unif)+1)}{n}.$$
After rearranging, we get
$$D_2(g||unif)\geq \frac{tn}{C(\cdot D_2(h||unif)+1)}-1.$$
Similar to the proof of Claim \ref{c1}, we get
$$\sigma(E)\leq e^{-\frac{tn}{C\cdot (D_2(h||unif)+1)}}.$$
Next, we would like to express the above inequality in terms of $D_4(f||unif)$. Notice that
\begin{align*}
    D_2(h||unif) & =\ln\int_{S^{n-1}\bigcap H} h^2dx=\ln\int_{S^{n-1}\bigcap H}\Bigg(\frac{(f\ln f)^2}{\big(\int_{S^{n-1}} f\ln f dz\big)^2}\Bigg)dx\\
    & = \ln\int_{S^{n-1}\bigcap H}(f\ln f)^2dx-2\ln\int_{S^{n-1}}(f \ln f) dz\\
    & \leq \ln\int_{S^{n-1}\bigcap H}(f\ln f)^2dx\\
    & \leq \ln\int_{S^{n-1}\bigcap H} f^4 dx\\
    & = 3D_4(f||unif)\\
\end{align*}
Therefore,
$$\sigma(E)\leq e^{-\frac{tn}{C\cdot (D_4(f||unif)+1)}}.$$ Recall that $E$ is the set of vectors that have the following property:
$$\int_{S^{n-1}\bigcap H}h\hspace{1mm} dx\geq 1+t.$$
Expressing $h$ in terms of $f$ and by definition of the KL-divergence, we get
\begin{gather*}
    \int_{S^{n-1}\bigcap H}\frac{f\ln f}{\int_{S^{n-1}}f\ln f  f dz}dx\geq 1+t\\
    \Leftrightarrow \frac{\int_{S^{n-1}\bigcap H}f\ln f dx}{\int_{S^{n-1}}f\ln f dz}\geq 1+t\\
    \Leftrightarrow \frac{\int_{S^{n-1}\bigcap H}f\ln f dx}{\int_{S^{n-1}}f\ln f dz}-1\geq t\\
    \Leftrightarrow \frac{D_1(f_{|H}||{unif}_{S^{n-1}\bigcap H})}{D_1(f||unif)}-1\geq t
\end{gather*}
Repeating a similar argument for the lower bound on $\sigma(E)$, we get
$$\Pr\Bigg[\Bigg|\frac{D_1(f_{|H}||{unif}_{S^{n-1}\bigcap H})}{D_1(f||unif)}-1\Bigg|\geq t\Bigg]\leq e^{-\frac{tn}{C\cdot( D_4(f||unif)+1)}}$$
which means with high probability, $f_{|H}$ is as close to the uniform distribution as $f$.
\end{proof}

Assume that $\int_{S^{n-1}\bigcap H}f_{|H} dx=1-\ell$, where $\ell\leq t\leq \frac{1}{2}$ for the $t$ in Claim \ref{c1}. Then $\bar{f}_{|H}=\frac{1}{1-\ell}f_{|H}$. Note that
\begin{align*}
    & \int_{S^{n-1}\bigcap H}\bar{f}_{|H}\ln\bar{f}_{|H}dx\\
    & = \int_{S^{n-1}\bigcap H}\frac{1}{1-\ell}f_{|H}\ln\Big(\frac{1}{1-\ell}f_{|H}\Big)dx\\
    & = \frac{1}{1-\ell}\int_{S^{n-1}\bigcap H}f_{|H}\ln f_{|H}dx+\frac{1}{1-\ell}\int_{S^{n-1}\bigcap H}f_{|H}\ln\Big(\frac{1}{1-\ell}\Big)dx\\
    & = \frac{1}{1-\ell}\int_{S^{n-1}\bigcap H}f_{|H}\ln f_{|H}dx+\frac{1}{1-\ell}\cdot\ln\Big(\frac{1}{1-\ell}\Big)\int_{S^{n-1}\bigcap H}f_{|H}dx\\
    & = \frac{1}{1-\ell}\int_{S^{n-1}\bigcap H}f_{|H}\ln f_{|H}dx+\frac{1}{1-\ell}\cdot\ln\Big(\frac{1}{1-\ell}\Big)\cdot (1-\ell)\\
    & = \frac{1}{1-\ell}\int_{S^{n-1}\bigcap H}f_{|H}\ln f_{|H}dx+\ln\Big(\frac{1}{1-\ell}\Big)\\
    & = \frac{1}{1-\ell}\cdot D_1(f_H||unif)-\ln{(1-\ell)}.\\
\end{align*}
Then,
\begin{align*}
     & \Pr\Bigg[\frac{D_1(\bar{f}_H||unif)}{D_1(f||unif)}\geq 1+t\Bigg]\\
    & = \Pr[D_1(\bar{f}_H||unif)\geq (1+t)\cdot D_1(f||unif)]\\
    & = \Pr\Bigg[\frac{1}{1-\ell}\cdot D_1(f_H||unif)-\ln{(1-\ell)}\Big)\geq (1+t)D_1(f||unif)\Bigg]\\
    & = \Pr\Bigg[\frac{D_1(f_h||unif)}{(1-\ell)(1+t)\cdot D_1(f||unif)}\geq 1+\frac{\ln{(1-\ell)}}{(1+t)\cdot D_1(f||unif)}\Bigg]\\
    & = \Pr\Bigg[\frac{D_1(f_H||unif)}{D_1(f||unif)}\geq (1-\ell)(1+t)+\frac{(1-\ell)\ln{(1-\ell)}}{D_1(f||unif)}\Bigg]\\
    & = \Pr\Bigg[\frac{D_1(f_H||unif)}{D_1(f||unif)}\geq 1+t-\ell-t\ell+\frac{(1-\ell)\ln{(1-\ell)}}{D_1(f||unif)}\Bigg].\\
\end{align*}
Bound $D_1(f||unif)$ below by a constant and set $t^\prime=t-\ell-t\ell+\frac{(1-\ell)\ln{(1-\ell)}}{D_1(f||unif)}$. Then,
\begin{align*}
    & \Pr\Bigg[\frac{D_1(f_H||unif)}{D_1(f||unif)}\geq 1+t-\ell-t\ell+\frac{(1-\ell)\ln{(1-\ell)}}{D_1(f||unif)}\Bigg]\\
    & \leq \Pr\Bigg[\frac{D_1(f_H||unif)}{D_1(f||unif)}\geq 1+t^\prime\Bigg]\\
    & \leq e^{-\frac{t^\prime n}{C\cdot (D_4(f||unif)+1)}}
\end{align*}
by Claim \ref{c4}.
\end{proof}

Now, we give a generalised version of Theorem \ref{5.1}. Note that this is a weak theorem and this stems from the fact that density functions very far from uniform are
not well-behaved with regards to  equator sampling.

\begin{theorem}
For density functions $f$ and $g$, $0<t<1$ and $\alpha>1$,
$$\Pr[|D_\alpha(\bar{f}_{|H}||\bar{g}_{|H})-D_\alpha(f||g)|\geq t]\leq Be^{-\frac{btn(\alpha-1))}{B(\alpha(D_2(f||unif)+1)+(D_{4\alpha}(f||g)+1)+(D_3(g||unif)+1))}},$$
for some constant $B,b>0$.
\end{theorem}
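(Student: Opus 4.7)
The plan is to adapt the three-claim structure of the proof of Theorem~\ref{5.1} to the case where the reference distribution is a general density $g$ rather than uniform. The starting point is the algebraic identity
\begin{equation*}
D_\alpha(\bar{f}_{|H}||\bar{g}_{|H})-D_\alpha(f||g)=\frac{1}{\alpha-1}\ln\frac{\int_{S^{n-1}\bigcap H}f^\alpha g^{1-\alpha}dx}{\int_{S^{n-1}}f^\alpha g^{1-\alpha}dz}-\frac{\alpha}{\alpha-1}\ln\int_{S^{n-1}\bigcap H}f_{|H}dx+\ln\int_{S^{n-1}\bigcap H}g_{|H}dx,
\end{equation*}
obtained by substituting $\bar{f}_{|H}=f_{|H}/\int f_{|H}$ and $\bar{g}_{|H}=g_{|H}/\int g_{|H}$ and expanding. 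Setting $Z=\int_{S^{n-1}}f^\alpha g^{1-\alpha}d\sigma$ and $h=f^\alpha g^{1-\alpha}/Z$ produces a probability density $h$ on $S^{n-1}$, and once each of the three integrals $\int f_{|H}$, $\int g_{|H}$, $\int_{S^{n-1}\bigcap H}h\,dx$ lies within $\delta=(\alpha-1)t/(C\alpha)$ of $1$ for a sufficiently large constant $C$, a triangle inequality together with $|\ln(1+x)|\leq 2|x|$ forces the displayed difference to have magnitude at most $t$.

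Each of these three near-$1$ statements is proved exactly as in Claim~\ref{c1}: for the exceptional set $E$ of equators where the integral exceeds $1+\delta$, let the uniform distribution on $E$ play the role of the second density in Theorem~\ref{5.2} and solve for $\sigma(E)$. The resulting concentration rate is $e^{-\delta n/(C(D_2(\cdot||unif)+1))}$ with the density of interest in place of ``$\cdot$''. Applied to $f$ and $g$ this produces exponents governed by $D_2(f||unif)+1$ and $D_2(g||unif)+1$, the latter being dominated by $D_3(g||unif)+1$. The substantive new step is bounding $D_2(h||unif)$. Since $\ln Z=(\alpha-1)D_\alpha(f||g)\geq 0$, one has $D_2(h||unif)\leq\ln\int f^{2\alpha}g^{2-2\alpha}d\sigma$, and Cauchy--Schwarz applied to the factorization
\begin{equation*}
f^{2\alpha}g^{2-2\alpha}=(f^{4\alpha}g^{1-4\alpha})^{1/2}\cdot g^{3/2}
\end{equation*}
(whose exponents balance because $(4\alpha)/2=2\alpha$ and $(1-4\alpha)/2+3/2=2-2\alpha$) yields
\begin{equation*}
\int f^{2\alpha}g^{2-2\alpha}d\sigma\leq\sqrt{\int f^{4\alpha}g^{1-4\alpha}d\sigma}\cdot\sqrt{\int g^3\,d\sigma}=\exp\!\Big(\tfrac{4\alpha-1}{2}D_{4\alpha}(f||g)+D_3(g||unif)\Big).
\end{equation*}
Thus $D_2(h||unif)+1$ is controlled by a linear combination of $D_{4\alpha}(f||g)+1$ and $D_3(g||unif)+1$, which together with the $(\alpha-1)/\alpha$ rescaling of $t$ produces the $D_{4\alpha}(f||g)$ and $D_3(g||unif)$ terms appearing in the denominator of the stated exponent.

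A union bound over the three concentration events then yields the claimed probability bound with some constants $B,b>0$; the event that $\bar{f}_{|H}$ or $\bar{g}_{|H}$ fails to exist is absorbed into the event $\int f_{|H}<1/2$ (respectively for $g$). The main technical choice is the Cauchy--Schwarz split used for $D_2(h||unif)$: the factorization must pair a piece whose squared integral yields a Renyi divergence of $f$ from $g$ (of order at least $2\alpha$) with a piece whose squared integral yields a Renyi moment of $g$ against uniform. The split above is essentially forced once one insists on matching $\int f^{4\alpha}g^{1-4\alpha}=e^{(4\alpha-1)D_{4\alpha}(f||g)}$; any alternative balanced split would substitute higher orders $D_\cdot(f||g)$ and $D_\cdot(g||unif)$, yielding a weaker denominator and explaining why the theorem statement is itself described by the authors as weak.
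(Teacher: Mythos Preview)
Your proposal is correct and follows essentially the same route as the paper: the same normalized density $h=f^\alpha g^{1-\alpha}/Z$ is introduced, the same Cauchy--Schwarz factorization $f^{2\alpha}g^{2-2\alpha}=(f^{4\alpha}g^{1-4\alpha})^{1/2}\,g^{3/2}$ is used to bound $D_2(h||unif)$ by a combination of $D_{4\alpha}(f||g)$ and $D_3(g||unif)$, and Theorem~\ref{5.2} is applied separately to $f$, $g$, and $h$ before combining by a union bound. The only cosmetic difference is that you write out the exact algebraic identity for $D_\alpha(\bar f_{|H}||\bar g_{|H})-D_\alpha(f||g)$ up front, whereas the paper first treats the unnormalized quantity (as in Claim~\ref{c2}) and then remarks that the normalization step is omitted.
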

\begin{proof}
We know that with high probability, little re-normalization is required to make $f_{|H}$ and $g_{|H}$ density functions, and $f_{|H}$ is as close to the uniform distribution as $f$ by Claim \ref{c1} and \ref{c2}. In order to express $D_{\alpha}(h||unif)$ of Claim \ref{c2} in terms of $D_\beta(f||unif)$  and $D_\gamma(g||unif)$, for some constants $\beta,\gamma$, we do the following: since $h=\frac{\frac{f^\alpha}{g^{\alpha-1}}}{\int_{S^{n-1}}\frac{f^\alpha}{g^{\alpha-1}}dz}$,
\begin{align*}
    D_2(h||unif) & =\ln\Bigg[\int_{S^{n-1}\bigcap H}h^2dx\Bigg]=\ln\Bigg[\int_{S^{n-1}\bigcap H}\Bigg(\frac{f^{2\alpha}}{g^{2\alpha-2}}\Bigg)\cdot\Bigg(\frac{1}{\int_{S^{n-1}}\frac{f^\alpha}{g^{\alpha-1}}dz}\Bigg)^2dx\Bigg]\\
    & \leq \ln\Bigg[\int_{S^{n-1}\bigcap H}\Bigg(\frac{f^{2\alpha}}{g^{2\alpha-2}}\Bigg)dx\Bigg]\\
    & = \ln\Bigg[\int_{S^{n-1}\bigcap H}\frac{f^{2\alpha}}{g^{2\alpha-1/2}}\cdot g^{3/2}dx\Bigg]\\
    & \leq \ln\Bigg[\sqrt{\int_{S^{n-1}\bigcap H}\frac{f^{4\alpha}}{g^{4\alpha-1}}dx }\cdot \sqrt{\int_{S^{n-1}\bigcap H}g^{3}dx}\Bigg]\\
    & \leq D_4(f||g)+D_3(g||unif).
\end{align*}
Thus, following Claim \ref{c2},
\begin{equation}\label{D3D4}
    \Pr\Bigg[\Bigg|D_\alpha(f_{|H}||g_{|H})-D_\alpha(f||g)\Bigg|\geq t\Bigg]\leq e^{-\frac{tn(\alpha-1)}{C(D_{4\alpha}(f||g)+D_3(g||unif)+1)}}.
\end{equation}
Therefore,
\begin{align*}
    \Pr[|D_\alpha(f_{|H}||g_{|H})-D_\alpha(f||g)|\geq t] & \leq e^{-\frac{tn}{C\cdot D_2(f||unif)+C}}+e^{-\frac{tn}{C\cdot D_2(g||unif)+C}}+e^{-\frac{tn(\alpha-1)}{C(D_{4\alpha}(f||g)+D_3(g||unif))+C}}\\
    & \leq e^{-\frac{tn(\alpha-1)}{C(D_2(f||unif)+D_{4\alpha}(f||g)+2D_3(g||unif))+C}}
\end{align*}

We skip the part of the proof where $f_{|H}$ and $g_{|H}$ must be normalized.
\end{proof}

Note that this result is weak, in that it involves divergence of $f,g$ against the uniform distribution. This seems to be unavoidable.

\section{Concentration for Conditional R\'enyi Divergences}

The following lemma is easy to show and follows from our result below as a special case.

\begin{lem}\label{cor2}
Consider density functions $f_{AB},g_{AB}$ on a bipartite system $A\times B$. Let $f_A$ and $g_A$ be the marginal density functions on $A$.
$f_{A|b}$ and $g_{A|b}$ denote the normalized densities on $A$ when $b\in B$ is fixed. Then
$$\mathbb{E}_B[D_\alpha(f_{A|b}||g_{A|b})]\leq D_\alpha(f_{AB}||g_{AB}).$$
\end{lem}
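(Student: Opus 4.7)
The plan is to prove a chain-rule-type identity for $D_\alpha$ and then drop a non-negative term. First I would decompose the joint densities as $f_{AB}(a,b)=f_B(b)\,f_{A|b}(a)$ and $g_{AB}(a,b)=g_B(b)\,g_{A|b}(a)$, and plug into the definition
$$D_\alpha(f_{AB}||g_{AB})=\frac{1}{\alpha-1}\ln\int\!\int f_{AB}(a,b)^\alpha\, g_{AB}(a,b)^{1-\alpha}\,da\,db.$$
Separating the powers and integrating over $a$ first, the inner integral equals $\exp\bigl((\alpha-1)D_\alpha(f_{A|b}||g_{A|b})\bigr)$ by the definition of $D_\alpha$, leaving a single $b$-integral of $f_B(b)^\alpha g_B(b)^{1-\alpha}$ against that exponential.

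Next I would rewrite the remaining $b$-integral as an expectation under $f_B$: using the identity $f_B(b)^\alpha g_B(b)^{1-\alpha}=f_B(b)\exp\bigl((\alpha-1)\ln(f_B(b)/g_B(b))\bigr)$ yields
$$D_\alpha(f_{AB}||g_{AB})=\frac{1}{\alpha-1}\ln\mathbb{E}_{b\sim f_B}\!\left[\exp\!\left((\alpha-1)\!\left[\ln\frac{f_B(b)}{g_B(b)}+D_\alpha(f_{A|b}||g_{A|b})\right]\right)\right].$$
For $\alpha>1$ the function $y\mapsto e^{(\alpha-1)y}$ is convex, so applying Jensen's inequality inside the $\ln$ and dividing by $\alpha-1>0$ gives
$$D_\alpha(f_{AB}||g_{AB})\ \geq\ D_1(f_B||g_B)+\mathbb{E}_{b\sim f_B}[D_\alpha(f_{A|b}||g_{A|b})].$$
Since $D_1(f_B||g_B)\geq 0$, discarding it produces the claim, with $\mathbb{E}_B$ read as $\mathbb{E}_{b\sim f_B}$ (the natural choice that makes the exponent appear linearly under a probability measure on $B$). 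For $\alpha=1$ the same derivation collapses to the standard KL chain rule $D(f_{AB}||g_{AB})=D(f_B||g_B)+\mathbb{E}_{b\sim f_B}[D(f_{A|b}||g_{A|b})]$, which is already tighter than the lemma.

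There is no substantive obstacle here: the argument is one density factorisation plus a single application of Jensen's inequality. The only bookkeeping to watch is the sign of $\alpha-1$ (the inequality direction is preserved for $\alpha\geq 1$, matching the regime of interest elsewhere in the paper) and the commitment to $f_B$ as the reference measure, which is exactly what makes the exponent appear linearly in $b$. The remark in the excerpt that this lemma \emph{``follows from our result below as a special case''} suggests the authors will deduce it from a stronger chain-rule-type bound, but the self-contained derivation above already suffices.
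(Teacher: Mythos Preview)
Your proof is correct and matches the paper's approach. The paper does not prove this lemma directly but notes it is the special case $E=B$ of the subsequent Lemma~\ref{lalpha}; that lemma is proved by exactly the same density factorisation followed by Jensen's inequality (applied termwise after splitting, rather than in one shot as you do), and specialising to $E=B$ with $f_B(E)=g_B(E)=1$ recovers your inequality $\mathbb{E}_{b\sim f_B}[D_\alpha(f_{A|b}\|g_{A|b})]\leq D_\alpha(f_{AB}\|g_{AB})$.
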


Here the expectation is over $B$ according to $f_B$. This means the conditional  R\'enyi divergence is upper bounded by the total R\'enyi divergence.
This holds for all $\alpha\geq 1$.
We need a stronger result, in which we bound the expectation under a large enough event on $B$.

\begin{lem}\label{lalpha}
Let $E\subseteq B$ and let $f_B(E)=\int_B f_B(b)\cdot\mathbb{I}_E(b)\hspace{1mm}dB>0$, where $\mathbb{I}_E(b)$ is the indicator function defined  by:
$$\mathbb{I}_{E}(b)=\begin{cases} 1,\text{ if }b\in E\\
0,\text{ if }b\notin E\end{cases}.$$
Then the following holds for $\alpha>1$:
\begin{equation}\label{eqlalpha}
    \int_B f_B(b)\cdot\frac{\mathbb{I}_E(b)}{f_B(E)}\cdot D_\alpha(f_{A|b}||g_{A|b})\hspace{1mm}dB\leq \ln(g_B(E))- \frac{\alpha}{\alpha-1}\ln(f_B(E))+D_\alpha(f_{AB}||g_{AB})
\end{equation}
\end{lem}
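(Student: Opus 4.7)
The plan is to reduce Lemma \ref{lalpha} to the unconditional chain-rule inequality (Lemma \ref{cor2}) by a ``restrict and renormalize'' trick on the $B$-coordinate. Define the tilted joint densities
\[\tilde f(a,b)=\frac{f_{AB}(a,b)\,\mathbb{I}_E(b)}{f_B(E)},\qquad \tilde g(a,b)=\frac{g_{AB}(a,b)\,\mathbb{I}_E(b)}{g_B(E)},\]
which are honest density functions on $A\times B$ supported on $A\times E$, provided $g_B(E)>0$. Their $B$-marginals are $\tilde f_B=f_B\mathbb{I}_E/f_B(E)$ and $\tilde g_B=g_B\mathbb{I}_E/g_B(E)$. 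Crucially, the normalizing constants cancel in the conditional densities, so for every $b\in E$ we have $\tilde f_{A|b}=f_{A|b}$ and $\tilde g_{A|b}=g_{A|b}$. Consequently the left-hand side of (\ref{eqlalpha}) is exactly $\mathbb{E}_{b\sim\tilde f_B}[D_\alpha(\tilde f_{A|b}||\tilde g_{A|b})]$.

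Next I would apply Lemma \ref{cor2} to the pair $(\tilde f,\tilde g)$ to obtain
\[\mathbb{E}_{b\sim\tilde f_B}[D_\alpha(\tilde f_{A|b}||\tilde g_{A|b})]\leq D_\alpha(\tilde f||\tilde g).\]
This is not circular, since Lemma \ref{cor2} has an elementary direct proof: factor $f_{AB}^\alpha g_{AB}^{1-\alpha}=f_B^\alpha g_B^{1-\alpha}\,e^{(\alpha-1)D_\alpha(f_{A|b}||g_{A|b})}$ under the integral in (\ref{D_alpha}), rewrite it as $\mathbb{E}_{b\sim f_B}\bigl[e^{(\alpha-1)(\ln(f_B/g_B)+D_\alpha(f_{A|b}||g_{A|b}))}\bigr]$, apply Jensen's inequality $\mathbb{E}[e^X]\geq e^{\mathbb{E}[X]}$, take logarithms, divide by $\alpha-1>0$, and drop the non-negative term $D(f_B||g_B)$.

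It remains to compute $D_\alpha(\tilde f||\tilde g)$ directly from definition (\ref{D_alpha}); pulling the constants $f_B(E)^{-\alpha}$ and $g_B(E)^{\alpha-1}$ outside the integral yields
\[D_\alpha(\tilde f||\tilde g) = -\frac{\alpha}{\alpha-1}\ln f_B(E) + \ln g_B(E) + \frac{1}{\alpha-1}\ln\int_{A\times E}f_{AB}^\alpha g_{AB}^{1-\alpha}\,dA\,dB.\]
Because the integrand is non-negative and $\tfrac{1}{\alpha-1}>0$, the last term is bounded above by $D_\alpha(f_{AB}||g_{AB})$, and chaining the three estimates yields (\ref{eqlalpha}). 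The one genuine technicality is the degenerate case $g_B(E)=0$ with $f_B(E)>0$, in which $\tilde g$ is undefined; however, the data-processing inequality applied to the partition $\{E,E^c\}$ forces $D_\alpha(f_{AB}||g_{AB})=+\infty$, so the stated bound holds with the natural convention (alternatively one simply assumes $g_B(E)>0$ throughout).
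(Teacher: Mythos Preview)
Your proof is correct, but structurally different from the paper's. The paper works directly on the left-hand side: it expands $D_\alpha(f_{A|b}\|g_{A|b})$ via the definition, pulls out the factor $(g_B(b)/f_B(b))^{\alpha-1}$ from inside the logarithm, splits the resulting expression into two terms, and then applies Jensen's inequality separately to each term (together with the crude bound $\mathbb{I}_E\leq \mathbf{1}_B$) to land on the right-hand side. In the paper's logical order Lemma~\ref{cor2} is actually a corollary of Lemma~\ref{lalpha} (take $E=B$), not an input to it.

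Your route instead reduces Lemma~\ref{lalpha} to Lemma~\ref{cor2} via the restrict-and-renormalize construction $(\tilde f,\tilde g)$, then reads off the constants from $D_\alpha(\tilde f\|\tilde g)$. This is perfectly legitimate because, as you note, Lemma~\ref{cor2} has its own one-line Jensen proof, so there is no circularity. Conceptually your argument is a bit cleaner: it makes transparent that the event-conditioned inequality is just the unconditioned one applied to the conditioned densities, with the extra $\ln g_B(E)-\frac{\alpha}{\alpha-1}\ln f_B(E)$ arising purely as the scaling penalty in $D_\alpha(\tilde f\|\tilde g)$, and the final inequality coming from enlarging the integration domain from $A\times E$ to $A\times B$. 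The paper's direct computation, by contrast, never names $\tilde f,\tilde g$ and so looks more like an ad~hoc calculation; it has the minor advantage of not needing to treat the $g_B(E)=0$ edge case separately, since the expansion never divides by $g_B(E)$.
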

\begin{proof}
By the definition of $D_\alpha(f_{A|b}||g_{A|b})$, $D_\alpha(f_{AB}||g_{AB})$, Jensen's inequality and the fact that $\mathbb{I}_E(b)\leq \textbf{1}_B$, where $\textbf{1}_B$ is the trivial all 1 function everywhere on $B$,
\begin{align*}
    & \int_B f_B(b)\cdot\frac{\mathbb{I}_E(b)}{f_B(E)}\cdot D_\alpha(f_{A|b}||g_{A|b})\hspace{1mm}dB\\
    & = \int_B f_B(b)\cdot\frac{\mathbb{I}_E(b)}{f_B(E)}\cdot \frac{1}{\alpha-1}\ln\Bigg(\int_A\Bigg(\frac{f_{AB}(a, b)}{f_B(b)}\Bigg)^\alpha\Bigg(\frac{g_B(b)}{g_{AB}(a, b)}\Bigg)^{\alpha-1}\hspace{1mm}dA\Bigg)dB\\
    & = \int_B f_B(b)\cdot\frac{\mathbb{I}_E(b)}{f_B(E)}\cdot \frac{1}{\alpha-1}\ln\Bigg(\Bigg(\frac{g_B(b)}{f_B(b)}\Bigg)^{\alpha-1}\cdot \int_A \frac{f_{AB}^\alpha(a, b)}{g_{AB}^{\alpha-1}(a, b)}\cdot \frac{1}{f_B(b)}\hspace{1mm}dA\Bigg)dB\\
    & = \int_B f_B(b)\cdot\frac{\mathbb{I}_E(b)}{f_B(E)}\cdot \frac{1}{\alpha-1}\ln\Bigg(\frac{g_B(b)}{f_B(b)}\Bigg)^{\alpha-1}\hspace{1mm}dB\\
    & + \int_B f_B(b)\cdot\frac{\mathbb{I}_E(b)}{f_B(E)}\cdot \frac{1}{\alpha-1}\ln\Bigg(\int_A\frac{f_{AB}^{\alpha}(a, b)}{g_{AB}^{\alpha-1}(a, b)}\cdot\frac{1}{f_B(b)}dA\Bigg)dB\\
    & \leq \ln\Bigg(\int_B \frac{\mathbb{I}_E(b)}{f_B(E)}\cdot g_B(b)\hspace{1mm}dB\Bigg)+ \frac{1}{\alpha-1}\ln\Bigg(\int_A\int_B f_B(b)\cdot\frac{\textbf{1}_B}{f_B(E)}\cdot\frac{f_{AB}^{\alpha}(a, b)}{g_{AB}^{\alpha-1}(a, b)}\cdot\frac{1}{f_B(b)}dA\Bigg)dB\\
    & = \ln\Bigg(\frac{g_B(E)}{f_B(E)}\Bigg)+D_\alpha(f_{AB}||g_{AB})+\frac{1}{\alpha-1}\ln\Bigg(\frac{1}{f_B(E)}\Bigg)\\
    & = \ln(g_B(E))-\ln(f_B(E))+D_\alpha(f_{AB}||g_{AB})-\frac{1}{\alpha-1}\ln(f_B(E))\\
    & = \ln(g_B(E))- \frac{\alpha}{\alpha-1}\ln(f_B(E))+D_\alpha(f_{AB}||g_{AB}).
\end{align*}
\end{proof}

We use this result now to show a concentration bound on the conditional R\'enyi divergence. By concentration here we mean an exponential bound on the upper tail of the distribution, not a concentration result around a fixed value.

\begin{corollary}\label{lalphac}
Let $E\subseteq B$ be defined as follows:
$$E=\{b\in B: D_\alpha(f_{A|b}||g_{A|b})\geq\ell\cdot D_\alpha(f_{AB}||g_{AB})\},$$
where $\ell$ is a constant. Then
$$f_B(E)\leq e^{-\frac{\alpha-1}{\alpha}(\ell-1)D_\alpha(f_{AB}||g_{AB})}.$$
\end{corollary}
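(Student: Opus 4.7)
My plan is to derive the corollary as a direct application of Lemma~\ref{lalpha} to the specific set $E$ defined in its hypothesis. The structure is a standard Markov-style argument: the lemma controls the $\frac{1}{f_B(E)}$-weighted average of $D_\alpha(f_{A|b}\Vert g_{A|b})$ over $E$, and by construction every $b\in E$ contributes at least $\ell\cdot D_\alpha(f_{AB}\Vert g_{AB})$ to that average.

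Concretely, I would first instantiate Lemma~\ref{lalpha} with this $E$. Since $D_\alpha(f_{A|b}\Vert g_{A|b})\geq \ell\cdot D_\alpha(f_{AB}\Vert g_{AB})$ for every $b\in E$, and $f_B$ conditioned on $E$ (via $f_B(b)\mathbb{I}_E(b)/f_B(E)$) is a probability density, the left-hand side of \eqref{eqlalpha} is at least $\ell\cdot D_\alpha(f_{AB}\Vert g_{AB})$. Combining this with the lemma gives
\[\ell\cdot D_\alpha(f_{AB}\Vert g_{AB})\;\leq\;\ln g_B(E)-\tfrac{\alpha}{\alpha-1}\ln f_B(E)+D_\alpha(f_{AB}\Vert g_{AB}).\]

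Next, I would use $g_B(E)\leq 1$ (since $g_B$ is a probability density) to drop the $\ln g_B(E)$ term, obtaining $(\ell-1)D_\alpha(f_{AB}\Vert g_{AB})\leq -\tfrac{\alpha}{\alpha-1}\ln f_B(E)$. Rearranging, multiplying through by $-\tfrac{\alpha-1}{\alpha}$, and exponentiating yields the claimed bound $f_B(E)\leq \exp\bigl(-\tfrac{\alpha-1}{\alpha}(\ell-1)D_\alpha(f_{AB}\Vert g_{AB})\bigr)$.

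There is essentially no obstacle here: the only subtlety is to make sure $f_B(E)>0$ so that the lemma applies (if $f_B(E)=0$ the corollary is trivial), and to be explicit that the inequality $\ell>1$ is needed for the bound to be non-trivial (for $\ell\leq 1$ the right-hand side is $\geq 1$ and the statement says nothing). The heavy lifting has already been done by Lemma~\ref{lalpha}; the corollary is a one-line Markov-type consequence of it, and I would keep the write-up short accordingly.
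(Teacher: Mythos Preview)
Your proposal is correct and follows essentially the same argument as the paper: instantiate Lemma~\ref{lalpha} on this $E$, lower-bound the left-hand side by $\ell\cdot D_\alpha(f_{AB}\Vert g_{AB})$ using the definition of $E$, drop $\ln g_B(E)\leq 0$, and rearrange. Your additional remarks about the trivial case $f_B(E)=0$ and the non-triviality condition $\ell>1$ are welcome clarifications not spelled out in the paper.
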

\begin{proof}
From Lemma \ref{lalpha}, we have
\begin{align*}
\frac{\alpha}{\alpha-1}\ln(f_B(E)) & \leq -\int_B f_B(b)\cdot\frac{\mathbb{I}_E(b)}{f_B(E)}\cdot D_\alpha(f_{A|b}||g_{A|b})\hspace{1mm}dB+D_\alpha(f_{AB}||g_{AB})\\
& \leq-(\ell-1)D_\alpha(f_{AB}||g_{AB})
\end{align*}
Hence
$$f_B(E)\leq e^{-\frac{\alpha-1}{\alpha}(\ell-1)D_\alpha(f_{AB}||g_{AB})}.$$

We note here that we use that $\ln(g_B(E))\leq 0$.
\end{proof}

\begin{remark}
By a limiting argument, Lemma \ref{lalpha} and Corollary \ref{lalphac} hold for $\alpha=\infty$.
\end{remark}

First note that a stronger result holds regarding the ratio of $f_B(E)$ and $g_B(E)$. Furthermore no limiting argument yields anything interesting for $\alpha=1$. This is no coincidence, since it is easy to see that no interesting concentration result is true for the 1-divergence. The best one can get in full generality is the Markov bound.
Also, Lemma \ref{cor2} is still true for $\alpha=1$ by the chain rule.

\section{The Lower Bound on $R^{C\leftrightarrow B\to A}(aBc)$}

We prove the lower bound on the $aBc$ problem in a certain setting regarding the communication parameters: Charlie communicates at most $o(n^{2/3})$ bits, Bob at most $o(n^{1/3})$ bits and Alice not at all\footnote{She does have the last word, by deciding.}. In this situation we show the error to be constant. Our techniques do not allow us to prove stronger tradeoff lower bounds. Nevertheless an $\Omega(n^{1/3})$ lower bound for $aBc$ readily follows. The reason we cannot allow for a tradeoff in the lower bound is the way our different concentration bounds interact.

Assume that we have a communication protocol of the type Charlie $\leftrightarrow$ Bob $\rightarrow$ Alice, which partitions $O_n\times S^{n-1}$ into one-way rectangles. Let $M\subseteq O_n$ and $R\subseteq S^{n-1}$ be sufficiently large subsets, where $\mu(M)\geq e^{-\delta^2\cdot n^{1/3}}$ and $\mu(R)\geq 10e^{-\delta\cdot n^{2/3}}$ for small $\delta>0$. Define Alice's function $Acc$ on the rectangle with sides $M$ and $R$, as the map, $Acc:S^{n-1}\rightarrow\{-1, 1\}$. Recall that any efficient randomized protocol (in the above sense) leads to such a one-way rectangle with small error and similar size.

\begin{theorem}\label{lower}
Suppose we are given a communication protocol for $aBc$, where Bob and Charlie can send messages to each other and in the end send their transcript to Alice, who uses a function on her input and the received transcript to produce the output. If Bob communicates $\delta^2\cdot n^{1/3}$ bits and Charlie communicates $10\delta\cdot n^{2/3}$, then the communication protocol has error at least $\frac{1}{5}$ for some constant $\delta>0$.
\end{theorem}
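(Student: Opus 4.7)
The plan is to set up a hard distribution on promise inputs, reduce the error lower bound to a total variation bound between two marginals of Alice's input, and then control that TV distance by chaining the equator-sampling Theorem \ref{5.1} with the conditional R\'enyi concentration of Corollary \ref{lalphac}. I would first take the hard distribution where $a$ and $c$ are independent and uniform on $S^{n-1}$, $\epsilon$ is a uniform $\pm 1$ sign, and $B$ is uniform on the coset $\{B'\in O_n : B'c = \epsilon a\}$, so that the marginals on $a$, $B$, $c$ are each uniform. Given the rectangle $M\times R$ and Alice's map $Acc:S^{n-1}\to\{-1,1\}$, let $P^\pm$ be the conditional distribution of $a$ given $\epsilon = \pm 1$ and $(B,c)\in M\times R$. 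A direct computation gives that Alice's error is at least $\tfrac{1}{2}(1 - \delta(P^+, P^-))$ where $\delta$ is total variation distance. Because $P^-(a) = P^+(-a)$ by symmetry, the triangle inequality reduces the task to showing $\delta(P^+, unif) \le 3/10$, for which it suffices by Pinsker's inequality to prove $D_1(P^+||unif) \le \eta$ for some small absolute constant $\eta$.

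For each fixed $c\in R$ the density $f_c$ of $Bc$ under $B\sim unif(M)$ is the pushforward of the uniform density on $M$ along an orthogonal map, so by data processing $D_\alpha(f_c||unif)\le \ln(1/\mu(M))\le \delta^2 n^{1/3}$ for every $\alpha \ge 1$. Similarly the density $g = \mathbf{1}_R/\mu(R)$ on $c$ satisfies $D_\alpha(g||unif) = \ln(1/\mu(R)) \le \delta n^{2/3}$. Now $P^+ = \mathbb{E}_{c\sim g}[f_c]$, and the crude bound $D_1(P^+||unif) \le \delta n^{2/3}$ from data processing is far too weak for Pinsker. The plan is to apply Theorem \ref{5.1} with $\alpha = 2$ and constant slack $t$ both to $f_c$ (yielding failure probability at most $B e^{-b n^{2/3}/\delta^2}$ over a random hyperplane $H$, using that $D_{2\alpha}(f_c||unif) \le \delta^2 n^{1/3}$) and to $g$ (failure probability at most $B e^{-b n^{1/3}/\delta}$). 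On a typical $H$ this reduces the problem to one on $S^{n-1}\cap H$ in which the relevant R\'enyi divergences remain bounded by the same quantities. Using Corollary \ref{lalphac} on the bipartite system $(a, c)$ under $\mu_+(a, c) = f_c(a) g(c)$, the $\alpha$-divergence from uniform of the $a$-marginal conditioned on $c$ exceeds its unconditional expectation by a constant factor only on an exponentially small subset of $c\in R$. Iterating a bounded number of such equator restrictions, controlling the induced conditional divergences via Corollary \ref{lalphac} at each step, eventually brings the divergence of the marginal down to $O(1)$, and Pinsker closes the argument.

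The hard part will be making the three failure probabilities simultaneously small while leaving enough divergence margin for the iteration to terminate. Bob's equator-sampling failure rate scales like $e^{-b n^{2/3}/\delta^2}$ (controlled by $D_{2\alpha}(f_c||unif)\le \delta^2 n^{1/3}$), Charlie's like $e^{-b n^{1/3}/\delta}$ (controlled by $D_{2\alpha}(g||unif)\le \delta n^{2/3}$), and Corollary \ref{lalphac} contributes a factor of $e^{-\Omega(D_\alpha(\mu_+))}$; balancing these against the rectangle sizes $\mu(M) \ge e^{-\delta^2 n^{1/3}}$ and $\mu(R) \ge 10 e^{-\delta n^{2/3}}$ is exactly what pins down the exponents $n^{1/3}$ for Bob and $n^{2/3}$ for Charlie, and forces $\delta$ to be a sufficiently small absolute constant. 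The asymmetry between the two thresholds traces back to the fact that equator sampling for $f_c$ enjoys the better rate $n^{2/3}$ because $D_{2\alpha}(f_c)$ is smaller, while for $g$ the rate is only $n^{1/3}$, and the protocol must stay below both thresholds on the corresponding side in order for the iteration to close.
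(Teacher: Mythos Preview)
Your proposal has a genuine gap at the decisive step. You correctly observe that each $D_\alpha(f_c\|unif)\le\delta^2 n^{1/3}$ and $D_\alpha(g\|unif)\le\delta n^{2/3}$, and that the resulting bound on $D_1(P^+\|unif)$ (whether $\delta n^{2/3}$ via data processing or, better, $\delta^2 n^{1/3}$ via convexity of KL) is far too large for Pinsker. But the ``iteration of equator restrictions'' you invoke cannot close that gap: Theorems~\ref{5.1} and~\ref{D1} say that restricting a density to a random equator \emph{preserves} its R\'enyi divergence from uniform up to a small error---it does not decrease it. Iterating a bounded number of such restrictions therefore still leaves $D_1$ of the restricted marginal at order $n^{1/3}$, and conditioning via Corollary~\ref{lalphac} on the bipartite system $(a,c)$ adds nothing here, since the divergence of $f_c$ is already bounded for \emph{every} $c$, not merely in expectation. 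There is simply no mechanism in the sketch that drives the marginal divergence down to $O(1)$.

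The paper argues in the opposite direction. It splits on the set $R'=\{c\in R:D_1(\beta_c\|unif)\ge\gamma\}$ for a small constant $\gamma$. If $\mu(R')\le\tfrac{1}{10}\mu(R)$, then for most $c$ the density $\beta_c$ is already $\gamma$-close to uniform in KL, and Pinsker applied to the average over $c\in R\setminus R'$ gives the error lower bound directly---no iteration needed. If instead $\mu(R')\ge\tfrac{1}{10}\mu(R)\ge e^{-\delta n^{2/3}}$, a cap-packing lemma extracts $m=\Theta(n^{1/3}/\delta)$ pairwise \emph{orthogonal} vectors $c_1,\dots,c_m\in R'$, and the chain rule gives
\[
D_1(M\|unif)\ \ge\ \sum_{i=1}^{m} D_1\bigl(M_i\mid M_1,\dots,M_{i-1}\ \big\|\ unif\bigr),\qquad M_i=Mc_i.
\]
Now equator sampling (Theorem~\ref{D1}) is used to show each conditional term stays \emph{above} $\gamma/(2e)$: conditioning on $m_1,\dots,m_{i-1}$ amounts to restricting the density of $M_i$ to the sphere orthogonal to those vectors, and the theorem (with the higher-order conditional divergences controlled through Corollary~\ref{lalphac}) says $D_1$ survives these restrictions. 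Summing $m$ terms of size $\Omega(\gamma)$ yields $D_1(M\|unif)\ge\Omega(\gamma n^{1/3}/\delta)$, contradicting $\mu(M)\ge e^{-\delta^2 n^{1/3}}$ once $\gamma$ is a suitable multiple of $\delta$. So the missing structural ideas are the case split on $R'$, the extraction of many orthogonal vectors from a large subset of the sphere, and the chain-rule accumulation---with equator sampling deployed to \emph{lower}-bound divergence after conditioning, not to reduce it.
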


\begin{proof}
Consider a one-way rectangle $M\times R\subseteq O_n\times S^{n-1}$ and a function $Acc:S^{n-1}\rightarrow \{-1, 1\}$ that determines whether to accept a particular $a$, given input sets $M$ and $R$. We partition $S^{n-1}$ into $L_{-1}$ and $L_1$ defined as follows:
$$L_{-1}=\{a\in L|Acc(a)=-1\}$$
$$L_1=\{a\in L|Acc(a)=1\}.$$

Without loss of generality, assume that $\mu(L_1)\geq 1/2$ and the average error on the defined inputs in $(L_1\bigcup L_{-1})\times M\times R$ is some $\epsilon>0$. Then, we remove from $L_1$ a size-$\mu(L_1)-1/2$ set of elements that have the largest average error among other elements in $L_1$ (to maintain the error after shrinking $L_1$). So, the new rectangle $L_1^\prime\times M\times R$ now has error at most $2\epsilon$.

Let $R^\prime=\{c \in R|D_1(\beta_c||unif)\geq \gamma\}$, where $\beta_c$ is the probability density function that arises when a random $B\in M$ is multiplied with a fixed $c\in R$ and a small constant $\gamma>0$. $R''$ is the same set for $-c$.

There are two cases: \\
\noindent \underline{Case 1: $\mu(R^\prime),\mu(R'')\leq\frac{1}{10}\mu(R)$}\\
In this case, we can remove $R^\prime$ from $R$ and let the resulting set be $\tilde{R}$. Now for all $c\in\tilde{R}$, we have $D_1(\beta_c||unif)\leq\gamma$ and the error of the rectangle $L_1^\prime\times M\times \tilde{R}$ is at most $\frac{20}{9}\epsilon$ after the removal of $R^\prime$ (at most $\frac{1}{10}$ of $R$) which causes the error to increase by a factor of $\frac{10}{9}$. Let $\tau(a)=\mathbb{E}_{c\in\tilde{R}}[\beta_c(a)]$. Then, $D_1(\tau||unif)\leq \gamma$.  By Pinsker's inequality, the following implication holds:

$$D_1(\tau||unif)\leq\gamma\Rightarrow |\int_{L_1'} \tau(a) d\mu -\int_{L_1'}unif(a) d\mu|\leq \delta(\tau, unif)\leq\sqrt{\frac{\gamma}{2}},$$
where $\delta$ is the total variational distance defined in subsection \ref{pinsker}.
Since $unif(a)=1$ for all $a$ and $\mu(L_1')= 1/2$, and hence when taking expectation over $L_1'$ the density for the expectation is 2 on every point in $L_1'$

$$1-\sqrt{2\gamma}\leq \mathbb{E}_{a\in L_1'}[\tau(a)]\leq 1+\sqrt{2\gamma}$$
and hence
$$\frac{1}{2}-\sqrt{\frac{\gamma}{2}}\leq\mathbb{E}_{a\in L_1'}\mathbb{E}_{c\in \tilde{R}}[\beta_c(a)]/2=\frac{\mu(L_1^\prime\times M\times \tilde{R}\bigcap H_1|H_1\bigcup H_{-1})}{\mu(L\times M\times R)}\leq \frac{1}{2}+\sqrt{\frac{\gamma}{2}}$$
$$\frac{1}{2}-\sqrt{\frac{\gamma}{2}}\leq\mathbb{E}_{a\in L_1'}\mathbb{E}_{c\in \tilde{R}}[\beta_{-c}(a)]/2=\frac{\mu(L_1^\prime\times M\times \tilde{R}\bigcap H_{-1}|H_1\bigcup H_{-1})}{\mu(L\times M\times R)}\leq \frac{1}{2}+\sqrt{\frac{\gamma}{2}},$$
where $H_1=\{(a, B, c)\in S^{n-1}\times O_n\times S^{n-1}|a^T\cdot B\cdot c=1\}$ and $H_{-1}=\{(a, B, c)\in S^{n-1}\times O_n\times S^{n-1}|a^T\cdot B\cdot c=-1\}$.
The second inequality follows by the same reasoning as the first via $R''$.

This means that $\mu(L_1^\prime\times M\times \tilde{R}\bigcap H_{1}|H_1\bigcup H_{-1})$ and $\mu(L_1^\prime\times M\times \tilde{R}\bigcap H_{-1}|H_1\bigcup H_{-1})$ are both within a $\bigg(\frac{1}{2}\pm\sqrt{\frac{\gamma}{2}}\bigg)$ factor of $\mu(L\times M\times R)$. However, $\mu(L_1^\prime\times M\times \tilde{R}\bigcap H_{-1}|H_1\bigcup H_{-1})\leq \frac{20}{9}\epsilon$, which is a contradiction since both $\gamma$ and $\epsilon$ are small. Therefore, it must be the case that
$$\frac{1}{2}-\sqrt{2\gamma}\leq \frac{20}{9}\epsilon\Rightarrow\epsilon\geq\frac{9}{40}-\frac{9}{20}\sqrt{2\gamma},$$
hence $\epsilon$ is large. Here setting $\gamma=10^{-3}$ works.
\vspace{3mm}

\noindent \underline{Case 2: $\mu(R^\prime)\geq\frac{1}{10}\mu(R)\geq e^{-\delta n^{2/3}}$ or the same for $R''$}\\

Without loss of generality we have the first condition true.
Given $R^\prime$ such that $\mu(R^\prime)\geq e^{-\delta n^{2/3}}$, we can find $\frac{n^{1/3}}{40\delta}$ orthogonal vectors in $R^\prime$ by Lemma 19 of \cite{caps}. Let these vectors be $c_1,\cdots, c_{\frac{n^{1/3}}{40\delta}}$ and extend to basis $c_1,\cdots,c_n$. Apply a unitary transformation $U$ that maps the $c_i$'s to $e_i$'s, where the $e_i$'s are the standard basis vectors in $\mathbb{R}^n$ with a 1 in the $i$-th entry and 0's elsewhere. This theorem will also hold for any other choice of basis. Let $M_i=M\cdot e_i$. The following is true by the chain rule:
\begin{align*}
    D_1(M||unif) & =\displaystyle\sum_{i=1}^n D_1(M_i|M_1,\cdots,M_{i-1}||unif|M_1,\cdots,M_{i-1})\\
    & \geq \displaystyle\sum_{i=1}^{\frac{n^{1/3}}{40\delta}} D_1(M_i|M_1,\cdots,M_{i-1}||unif|M_1,\cdots,M_{i-1}),
\end{align*}
where $D_1(P|A,B,\cdots||Q|A,B,\cdots)$ denotes the conditional divergence. We know that for all $c_i\in R^\prime$,
$$D_1(\beta_{c_i}||unif)=D_1(M_i||unif)\geq\gamma.$$
Assume throughout that $\mu(M)\geq e^{-\delta^2 n^{1/3}}$. Our goal is to show that
$$D_1(M)\geq \frac{n^{1/3}}{\delta}\cdot\Omega(\gamma),$$ which is equivalent to
$$\mu(M)\leq e^{-n^{1/3}\cdot const},$$
for a constant $const$, which will lead to a contraction to the above assumption.

Observe that
$$D_4(M_i|M_1,\ldots,M_{i-1}||unif)\leq D_4(M||unif)\leq \delta^2\cdot n^{1/3}$$
by Lemma \ref{cor2} and the data processing inequality.

Note that for all {\em fixed} $m_1,\cdots, m_{k-1}$,
\begin{align}\label{exp}
\begin{split}
    & D_1(M_i|m_1,\cdots,m_{k-1}||unif|m_1,\cdots,m_{k-1})\\
    & = \int_{S^{n-1}\cap m_1^\perp\cap\cdots\cap m_{k-1}^\perp} f_{M_i|m_1,\cdots, m_{k-1}}(x) \ln f_{M_i|m_1,\ldots, m_{k-1}}(x) dx\\
    & = \mathbb{E}_{v\in S^{n-1}\cap m_1^\perp\cap\cdots\cap m_{k-1}^\perp} \mathbb{E}_{x\in S^{n-1}\cap v^\perp\cap m_1^\perp\cap \cdots,\cap m_{k-1}^\perp} f_{M_i|m_1,\ldots, m_{k-1}} (x) \ln f_{M_i|m_1,\ldots, m_{k-1}}(x),\\
\end{split}
\end{align}
where the expectations are under the uniform distributions. Here $f_{M_i|\cdots}$ is normalized under the stated condition.

Note that we may apply our equator sampling bound from Theorem \ref{D1} for random $v$ from the sphere and get an upper bound on the deviation of $D_1(\bar{f}^v_{M_i|m_1,\ldots, m_{k-1}}||unif)$ from expectation showing that it is close to expectation, where the function here is the normalized conditional density function when fixing $v$, which is equal to $D_1(\bar{f}_{M_i|m_1,\ldots, m_{k-1},v}||unif)$, where $v$ is uniformly random but orthogonal to the span of the  $m_j$.

We say that $m_1,\cdots,m_{k-1}$ are "good" for $i$ for $k\leq i$ if the following criteria are satisfied.
\begin{itemize}
    \item $D_1(M_i|m_1,\cdots,m_{k-1}||unif|m_1,\cdots,m_{k-1})\geq (1-t)^{k-1} D_1(M_i||unif)\geq(1-t)^{k-1}\gamma$\\
    \item $D_4(M_i|m_1,\cdots,m_{k-1}||unif|m_1,\cdots,m_{k-1})\leq 2D_4(M||unif)\leq 2\delta^2\cdot n^{1/3}$\\
    \item $D_2(M_k|m_1,\cdots,m_{k-1}||unif|m_1,\cdots,m_{k-1})\leq 2D_2(M||unif)\leq 2\delta^2\cdot n^{1/3}$\\
\end{itemize}

We will show that the probability of the first criterion is at least $1-e^{-tn^{1/3}}$, whereas the probability of the second criterion is at least $1-e^{-\frac{3}{4} D_4(M||unif)}\geq 1-e^{-\frac{3}{4}\delta^2n^{1/3}}$ by Corollary \ref{lalphac} when we choose $\ell=2$ and $\alpha=4$ . The probability of the third criterion follows from Corollary \ref{lalphac} when we choose $\ell=\alpha=2$, which is at least $1-e^{-\frac{D_2(f||unif)}{2}}\geq 1-e^{-\delta^2n^{1/3}}$.

Now, fix any "good" $m_1\cdots m_{k-1}$. Then for $v\in S^{n-1}\cap m_1^\perp\cap\cdots \cap m_{k-1}^\perp$,
\begin{align}\label{prob}
\begin{split}
    & \Pr_v\bigg[\frac{D_1(M_i|m_1,\cdots, m_{k-1},v||unif)}{(1-t)^{k-1}D_1(M_i||unif)}\leq 1-t\Bigg]\\
    & \leq \Pr_v\bigg[\frac{D_1(M_i|m_1,\cdots, m_{k-1},v||unif|m_1\cdots m_{k-1},v)}{D_1(M_i|m_1,\cdots, m_{k-1}||unif|m_1,\cdots,m_{k-1})}\leq 1-t\Bigg] \\
    & \leq e^{-\frac{tn}{(1-t)^{k-1}D_4(M_i|m_1,\cdots ,m_k||unif|m_1\cdots m_k)}}\\
    & \leq e^{-\frac{tn}{(1-t)^{k-1}D_4(M||unif)}\cdot const}\\
    & \leq e^{-tn^{2/3}\cdot const}\\
\end{split}
\end{align}
by Theorem \ref{D1} and the data processing inequality. Here we use that $(1-t)^{k-1}$ is $\Omega(1)$, see below for a justification.

Let $E$ be the set defined as
$$E=\Bigg\{v\in S^{n-1}:(v\perp m_1)\wedge\cdots \wedge (v\perp m_{k-1})\wedge\Bigg(\frac{D_1(M_i|m_1,\cdots,m_{k-1},v||unif|m_1,\cdots,m_{k-1}, v)}{(1-t)^{k-1}D_1(M_i||unif)}\leq 1-t\Bigg) \Bigg\}.$$
Then, $\mu(E)\leq e^{-tn^{2/3}\cdot const}$ due to (\ref{prob}).  Define an indicator variable as follows:
$$\mathbb{I}_{E}(x)\begin{cases}
1,\text{if $x\in E$}\\
0,\text{if $x\notin E$}\\
\end{cases}.$$

Shortening $f_{M_k|m_1,\cdots,m_{k-1}}$ to $f$ we have by H\"older's inequality,
\begin{align*}
    f(E ) & =\int_{S^{n-1}} f(x)\cdot \mathbb{I}_E(x)dx\\
    & \leq \Bigg(\int_{S^{n-1}}(f(x))^\alpha\hspace{1mm}dx\Bigg)^{1/\alpha}\cdot \Bigg(\int_{S^{n-1}}(\mathbb{I}_E(x))^{1-1/\alpha}\hspace{1mm}dx\Bigg)^{1-1/\alpha}\\
    & = e^{\frac{\alpha-1}{\alpha}D_\alpha(f||unif)}\cdot\mu(E)^{(1-1/\alpha)},\\
\end{align*}
where $f_{M_k|m_1,\cdots,m_{k-1}}$ is the normalized density function of $M_k$ conditioned on $m_1,\cdots, m_{k-1}$. If $\alpha=2$, then
\begin{align*}
    f(E) & \leq e^{\frac{D_2(f||unif)}{2}}\cdot\mu(E)^{1/2}\\
    & \leq e^{\delta^2n^{1/3}}\cdot e^{-\frac{-tn^{2/3}\cdot const}{2}}\\
    & = e^{\delta^2n^{1/3}}\cdot e^{-\frac{-n^{2/3}\cdot \frac{\delta}{n^{1/3}}}{2}const}\\
    & = e^{\delta^2n^{1/3}}\cdot e^{-const\delta n^{1/3}}\\
    & = e^{-\mathcal{C}\delta n^{1/3}}\\
\end{align*}
when $t=\frac{40\delta}{n^{1/3}}$ and $\mathcal{C}$ is a constant. This is the probability of the first criterion of "good" $m_1,\cdots, m_{k-1}$.

Recall that $\mu(R^\prime)\geq e^{-\delta^2\cdot n^{1/3}}$ and for all $c_i\in R^\prime$, $D_1(\beta_{c_i}||unif)=D_1(M_i||unif)\geq\gamma$. We will be applying (\ref{prob}) repeatedly.

By Equation (\ref{exp}), we have for all $k\leq i-1$ and $i\leq t$ that
\begin{align*}
    & D_1(M_i|M_1,\cdots, M_{k}|||unif|M_1,\cdots, M_k)\\
    & =\mathbb{E}_{m_1,\cdots ,m_k} D_1(M_i|m_1,\cdots, m_k||unif|m_1,\cdots, m_k)\\
    & \geq \mathbb{E}_{\text{good }m_1,\cdots, m_{k}}\mathbb{E}_{m_k}D_1(M_i|m_1,\cdots, m_k||unif|m_1,\cdots, m_k)\cdot Prob(m_1,\ldots, m_k \mbox{good})\\
    & \geq D_1(M_i||unif)\cdot (1-t)^{k-1}\cdot (1-t)(1-e^{-\mathcal{C}n^{1/3}\cdot \delta})\cdot (1-(k-1)e^{-\mathcal{C}n^{1/3}\cdot \delta})\\
    & = D_1(M_i||unif)\cdot (1-t)^k\cdot (1-e^{-\mathcal{C}n^{1/3}\cdot \delta})\\
    & = D_1(M_i||unif)\cdot (1-t)^{1/t}\cdot (1-e^{-\mathcal{C}n^{1/3}\cdot \delta})\\
    & \geq \frac{D_1(M_i||unif)}{2e}\\
    & \geq \frac{\gamma}{2e}.\\
\end{align*}

\begin{comment}
After applying the probability bound (\ref{prob}) for $\frac{n^{1/3}}{40\delta}$ times, we get
$$\Pr\Bigg[D_1(M_i|M_1,\cdots,M_{i-1}||unif|M_1,\cdots,M_{i-1})\leq\frac{\gamma}{2e}\Bigg]\leq \frac{n^{1/3}}{40\delta}\cdot e^{-tn^{2/3}\cdot const}$$
by the union bound.
This implies that all $D_1(M_i|M_1,\cdots,M_{i-1}||unif|M_1,\cdots,M_{i-1})\leq \frac{\gamma}{2e}$ with high probability, for $i=1,\cdots \frac{n^{1/3}}{40\delta}$.

\end{comment}  Therefore,
$$D_1(M||unif)\geq\displaystyle\sum_{i=1}^{\frac{n^{1/3}}{40\delta}}D_1(M_i|M_1,\cdots,M_{i-1}||unif|M_1,\cdots,M_{i-1})\geq\frac{n^{1/3}}{40\delta}\cdot\frac{\gamma}{2e}\Rightarrow \mu(M)\leq e^{-n^{1/3}}$$

\end{comment}
By setting $\gamma=80\delta$ and the aforementioned $\gamma=10^{-3}$ we get our lower bound.\end{proof}
\begin{corollary}
$R^{C\rightarrow B\rightarrow A}(aBc)=\Omega(n^{1/3})$.
\end{corollary}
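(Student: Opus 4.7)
\begin{proofs}
The plan is to derive this bound as a direct specialization of Theorem \ref{lower}. A $C\to B\to A$ one-way protocol is the most restricted case of the setting considered in that theorem: Charlie sends a single message to Bob with no Bob-to-Charlie traffic, and Bob then sends a single message to Alice. If the total cost of such a protocol is $k$, then Charlie's one message and Bob's one message each have length at most $k$, so both ``Charlie's communication'' and ``Bob's communication'' in the sense of Theorem \ref{lower} are bounded by $k$.

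Suppose, for contradiction, that $R^{C\to B\to A}(aBc)=o(n^{1/3})$. By standard public-coin parallel repetition with Alice taking the majority vote of the outputs, error can be amplified from $1/3$ to, say, $1/10$ at only a constant-factor cost in communication; this amplification preserves the $C\to B\to A$ structure, since Charlie and Bob just concatenate their messages for the $O(1)$ independent copies and the final majority is computed locally by Alice. Hence we would obtain a $C\to B\to A$ protocol with error at most $1/10$ and total communication still $o(n^{1/3})$.

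For this amplified protocol both Charlie's and Bob's individual message lengths are $o(n^{1/3})$. In particular, fixing the constant $\delta>0$ supplied by Theorem \ref{lower}, for all sufficiently large $n$ one has Charlie's message shorter than $10\delta\cdot n^{2/3}$ and Bob's message shorter than $\delta^2\cdot n^{1/3}$. Theorem \ref{lower} then forces the error of any such protocol to be at least $1/5$, contradicting the upper bound of $1/10$ achieved by the amplified protocol. Therefore $R^{C\to B\to A}(aBc)=\Omega(n^{1/3})$.

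There is no real obstacle here beyond verifying that the one-way structure survives parallel repetition, which it plainly does; the tradeoff asymmetry between Bob's $\delta^2\cdot n^{1/3}$ budget and Charlie's $10\delta\cdot n^{2/3}$ budget in Theorem \ref{lower} is not exploited, because in the strictly one-way model the binding constraint is the smaller (Bob-side) budget, and that alone already yields the claimed cube-root lower bound.
\end{proofs}
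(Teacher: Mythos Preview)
Your proof is correct and follows the same approach the paper intends: the corollary is stated immediately after Theorem \ref{lower} without any separate proof, so the paper treats it as an immediate specialization. You have in fact been slightly more careful than the paper by explicitly inserting the standard error-amplification step (from the default error $1/3$ down below the $1/5$ threshold of Theorem \ref{lower}), which is needed but left implicit in the paper.
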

%%\begin{theorem}
%Let $M\subseteq O_n$ and $R\subseteq S^{n-1}$ have $\mu(M), \mu(R)\geq 2^{-\delta\cdot n^{1/3}} $ for the Haar measure. Denote by $\tau$ the density function of the probability distribution that %arises, when $B\in M$ and $c\in R$ are chosen uniformly from these sets independently, and the matrix-vector product $Bc$ is formed. Then for a constant $\gamma$,
%$$D_1(\tau||unif)\leq\gamma.$$
%\end{theorem}

\section{Conclusion}

Theorem \ref{lower} immediately shows us that no meaningful approximation of $a^TBc$ is possible in the streaming model with space complexity $o(n^{1/3})$. A stronger statement holds.
Even if $c$ is streamed $k$ times and $B$ is streamed $l$ times and $a$ is streamed just once (and last) then any streaming algorithm for $aBc$ needs space at least $\Omega(\min\{n^{2/3}/k, n^{1/3}/l\})$.

\section{Open Problems}

\begin{enumerate}
\item Show that $R(aBc)=\Omega(\sqrt n)$. This would be useful in the one-way setting and in general.
\item Show that there is a streaming algorithm that approximates $a^TBc$ within additive error $\epsilon$ with space $O(\sqrt n/\epsilon^2)$.
\item Show that $R(ABC)$ is large. Note that this requires that $n$ is even, because otherwise it is trivial to simply compute the determinant of $ABC$.
\item We conjecture that if one multiplies $c\in S^{n-1}\cap R$ and $B\in O_n\cap M$ for $M,R$ with $\mu(M),\mu(R)\geq 2^{-\delta\sqrt n}$, then the product $Bc$ is close to uniform on the sphere.
\item We also conjecture that if one multiplies $B,C$ uniformly at random from subsets $M,R$ of $SO_n$ such that $\mu(M),\mu(R)\geq 
2^{\delta\sqrt n}$ then the resulting distribution is close to uniform on $SO_n$.
\end{enumerate}

\bibliography{qc}
\bibliographystyle{plainurl}

\end{document}